\theoremstyle{definition}
\newtheorem{definition}{Definition}
\theoremstyle{theorem}
\newtheorem{theorem}{Theorem}
\theoremstyle{remark}
\theoremstyle{proposition}
\theoremstyle{corollary}
\theoremstyle{proof}
\newtheorem{assumption}{Assumption}
\theoremstyle{assumption}
\theoremstyle{lemma}
\begin{document}
%
\title{Deep and Decentralized Multi-Agent Coverage of a Target with Unknown Distribution}
%
%
%

\author{Hossein Rastgoftar
\thanks{{\color{black}H. Rastgoftar is with the Department
of Aerospace and Mechanical Engineering, University of Arizona, Tucson,
AZ, 85721 USA e-mail: hrastgoftar@arizona.edu.}}
}
%
%

\markboth{
}%
{Shell \MakeLowercase{\textit{et al.}}: Bare Demo of IEEEtran.cls for IEEE Journals}
%



\maketitle
\begin{abstract}
This paper proposes a new architecture for multi-agent systems to cover an unknowingly distributed fast, safely, and decentralizedly. The inter-agent communication is organized by a directed graph with fixed topology, and we model agent coordination as a decentralized leader-follower problem with time-varying communication weights. Given this problem setting, we first present a method for converting communication graph into a neural network, where an agent can be represented by a unique node of the communication graph but multiple neurons of the corresponding neural network.  We then apply a mass-cetric strategy to  train time-varying communication weights of the neural network in a decentralized fashion which in turn implies that the observation zone of every follower agent is independently assigned by the follower based on positions of in-neighbors. By training the neural network, we can ensure safe and decentralized multi-agent  coordination  of coverage control.  Despite the target is unknown to the agent team,  we provide a proof for convergence of the proposed multi-agent coverage method. 
The functionality of the proposed method will be validated by a large-scale multi-copter team covering distributed targets on the ground.

\end{abstract}

\begin{IEEEkeywords}
Large-Scale Coordination, Multi-Agent Coverage, and Decentralized Control.
\end{IEEEkeywords}

\section{Introduction}\label{Introduction}
Multi-agent coberage has been received a lot of attentions by the control community over the recent years. 
Multi-agent coverage has many applications such as wildfire management \cite{9001230, 9147790}, border security \cite{pan2022mate}, agriculture \cite{din2022deep, davoodi2021graph}, and wildlife monitoring \cite{seraj2022multi}. A  variety of coverage approaches have been proposed by the researchers  that are reviewed in Section \ref{Related Work}.

\subsection{Related Work}\label{Related Work}
Sweep \cite{vasquez2022divide, bochkarev2016minimizing} and Spiral \cite{cabreira2018energy, choi2009online} are two available methods used for the single-vehicle coverage path planning, while  Vehicle Routing Problem \cite{toth2002overview, toth2002vehicle} is widely used for the  multi-agent coverage path planning. Diffusion-based multi-agent coverage convergence and stability are proposed in Ref. \cite{elamvazhuthi2018nonlinear}. Decentralized multi-agent coverage using local density feedback is  achieved by applying discrete-time mean-field model in Ref. \cite{biswal2021decentralized}.  Multi-agent coverage conducted by unicycle robots guided by a single leader is investigated in Ref. \cite{atincc2020swarm}, where the authors propose to decouple coordination and coverage modes. Adaptive decentralized multi-agent coverage is studied in  \cite{song2011decentralized, dirafzoon2011decentralized}. Ref. \cite{krishnan2022multiscale} offers a multiscale analysis of multi-agent coverage control that provides the convergence properties in continuous time. Human-centered active sensing of wildfire by unmanned aerial vehicles is studied in Ref. \cite{9147790}. Ref. \cite{9336858} suggests to apply k-means algorithm for planning of zone coverage by multiple agents. Reinforcement Learning- (RL-) based multi-agent coverage control is investigated by Refs. \cite{din2022deep, adepegba2016multi, dai2020graph, xiao2020distributed, kouzehgar2020multi}. Authors in  \cite{bai2021adaptive, nguyen2016discretized, abbasi2017new, luo2019voronoi, adepegba2016multi} used Vononoi-based approach for covering a distributed target.  Vononoi-based coverage  in the presence of obstacles and failures is presented as a leader-follower problem in Ref. \cite{bai2021adaptive}. Ref. \cite{patel2020multi} experimentally evaluate functionality Voronoi-based and other multi-agent coverage approaches in urban environment.

\subsection{Contributions}
This paper develops a method for decentralized multi-agent coverage  of a distributed target with an unknown distribution. We propose to define the inter-agent communications by a deep neural network, which is called \textit{coverage neural network},  with time-varying weights that are obtained such that coverage convergence is ensured.  
To this end, the paper establishes specific rules for structuring the coverage neural network  and proposes a mass-centric approach to train the network weights, at any time $t$, that specify inter-agent communication among the agent team. Although, the target is unknown to the agent team, we prove that the weights ultimately converge to the unique values that quantify target distribution in the motion space. The functionality of the proposed coverage method will be validated by simulating aerial coverage conducted by a team of  quadcopetr agents. 

Compared to the existing work, this paper offers the following novel contributions:
\begin{enumerate}
    \item The proposed multi-agent coverage approach learns the inter-agent communication weights in a forward manner as opposed to the existing neural learning problem, where they are trained by combining forward and backward iterations. More specifically, weights input to a hidden layer are assigned based on the (i) outputs of the previous layer and (i) target data information independently measured by observing the neighboring environment. We provide the proof of convergence for the proposed learning approach.
    \item The paper proposes a method for converting inter-agent communication graph into a neural network that will be used for organizing the agents, structuring the inter-agent communications, and partitioning the coverage domain.
    \item The paper develops a method for decentralized partitioning and coverage of an unknowingly distributed target. This method is indeed more computationally-efficient than the the available Voronoi-based partitioning methods that require all agents' positions to determine the search subdomain allocated to each individual agent.
\end{enumerate}

\subsection{Outline}
The remainder of the paper is organized as follows: The Problem Statement and Formulation are given in Section \ref{Problem Statement and Formulation}. The paper methodology is presented in Section \ref{Methodology}. Assuming every agent is a quadcopter, the multi-agent network dynamics is obtained in Section \ref{networkdynamics}, and followed by Simulation Results in Section \ref{Simulation} and Conclusion in Section \ref{Conclusion}.

\section{Problem Statement and Formulation}\label{Problem Statement and Formulation}
We consider a team of $N$ agents identified by set $\mathcal{V}=\left\{1,\cdots,N\right\}$ and classify them into the following three groups: 
\begin{enumerate}
\item ``boundary'' agents identified by $\mathcal{V}_B=\left\{1,\cdots,N_B\right\}$ are distributed along the boundary of the agent team configuration;
\item a single ``core'' agent identified by singleton $\mathcal{V}_C=\left\{N_B+1\right\}$ is an interior agent with the global position representing the global position of the agent configuration; and 
\item  follower agents defined by $\mathcal{V}_I=\left\{N_B+2,\cdots,N\right\}$ are all located inside the agent team configuration.
\end{enumerate}
Note that  $\mathcal{V}_B$, $\mathcal{V}_C$, and $\mathcal{V}_I$ are disjoint subsets of $\mathcal{V}$, i.e. $\mathcal{V}=\mathcal{V}_B\bigcup \mathcal{V}_C\bigcup \mathcal{V}_I$.
Inter-agent commucication among the agents are defined by graph $\mathcal{G}\left(\mathcal{V},\mathcal{E}\right)$ where $\mathcal{E}\subset \mathcal{V}\times \mathcal{V}$ defines edges of graph $\mathcal{G}$ and each edge represents a unique communication link (if $\left(j,i\right)\in \mathcal{E}$, then, $i$ accesses position of $j\in \mathcal{V}$). 
\begin{definition}\label{InNeighbor}
    We define
    \begin{equation}
        \mathcal{N}_i=\left\{j\in \mathcal{V}:\left(j,i\right)\in \mathcal{E}\right\},\qquad \forall i\in \mathcal{V}.
    \end{equation}
    as the set of in-neighbors of every agent $i\in \mathcal{V}$.
\end{definition}

\subsection{Neural Network Representation of Inter-Agent Communication} 
Graph $\mathcal{G}$ is defined such that it can be represented by a deep neural network with $M+1$ layers, where we use set  $\mathcal{M}=\left\{0,\cdots,M\right\}$ to define the layer identification numbers. Set $\mathcal{V}$ can be expressed as
\begin{equation}
    \mathcal{V}=\bigcup_{l\in \mathcal{M}}\mathcal{V}_l
\end{equation}
where $\mathcal{V}_0$ through $\mathcal{V}_M$ are disjoint subsets of $\mathcal{V}$. We use $\mathcal{W}_0$,  $\mathcal{W}_1$, $\cdots$, $\mathcal{W}_M$ to identify the neuron of layers $0$ through $M$ of the coverage neural network, and $\mathcal{W}_l$ and $\mathcal{V}_l$ are related by 
\begin{equation}
    \mathcal{W}_l=\begin{cases}
        \mathcal{V}_l&l\in \left\{0,M\right\}\\
        \mathcal{W}_{l-1}\bigcup\mathcal{V}_l&l\in \mathcal{M}\setminus\left\{0,M\right\}\\
    \end{cases}
    ,
\end{equation}
where $\mathcal{W}_0=\mathcal{V}_0=\mathcal{V}_B\bigcup\mathcal{V}_C$ defines neurons that uniquely represent boundary and core agents. 
\begin{definition}
For every neuron $i\in \mathcal{W}_l$ at layer $l\in \mathcal{M}\setminus \left\{0\right\}$, $\mathcal{I}_{i,l}\in \mathcal{W}_{l-1}$ defines those neurons of $\mathcal{W}_{l-1}$ that are connected to $i\in \mathcal{W}_l$.
Assuming the agent team forms an $n$-dimensional  configuration in a three-dimensional motion space ($n=2,3$),  we use the following key rules to define $\mathcal{I}_{i,l}$ for every $i\in \mathcal{W}_l$ and $l\in \mathcal{M}\setminus \left\{0\right\}$:
\begin{equation}\label{InNeighbor}
    \left|\mathcal{I}_{i,l}\right|=\begin{cases}
    1&\mathrm{If}~i\in \mathcal{W}_{l-1}\bigcap  \mathcal{W}_{l}\mathrm{~and~}l\in \mathcal{M}\setminus \left\{0\right\}\\
     n+1&\mathrm{If}~i\in \mathcal{W}_{l}-\mathcal{W}_{l-1}\mathrm{~and~}l\in \mathcal{M}\setminus \left\{0\right\}\\
     n+1&\mathrm{If}~i\in \mathcal{W}_{M}\\
  0&\mathrm{If}~i\in \mathcal{W}_0\\
    \end{cases}
    .
\end{equation}
\end{definition}

\begin{figure}[h]
\centering
\subfigure[]{\includegraphics[width=0.32\linewidth]{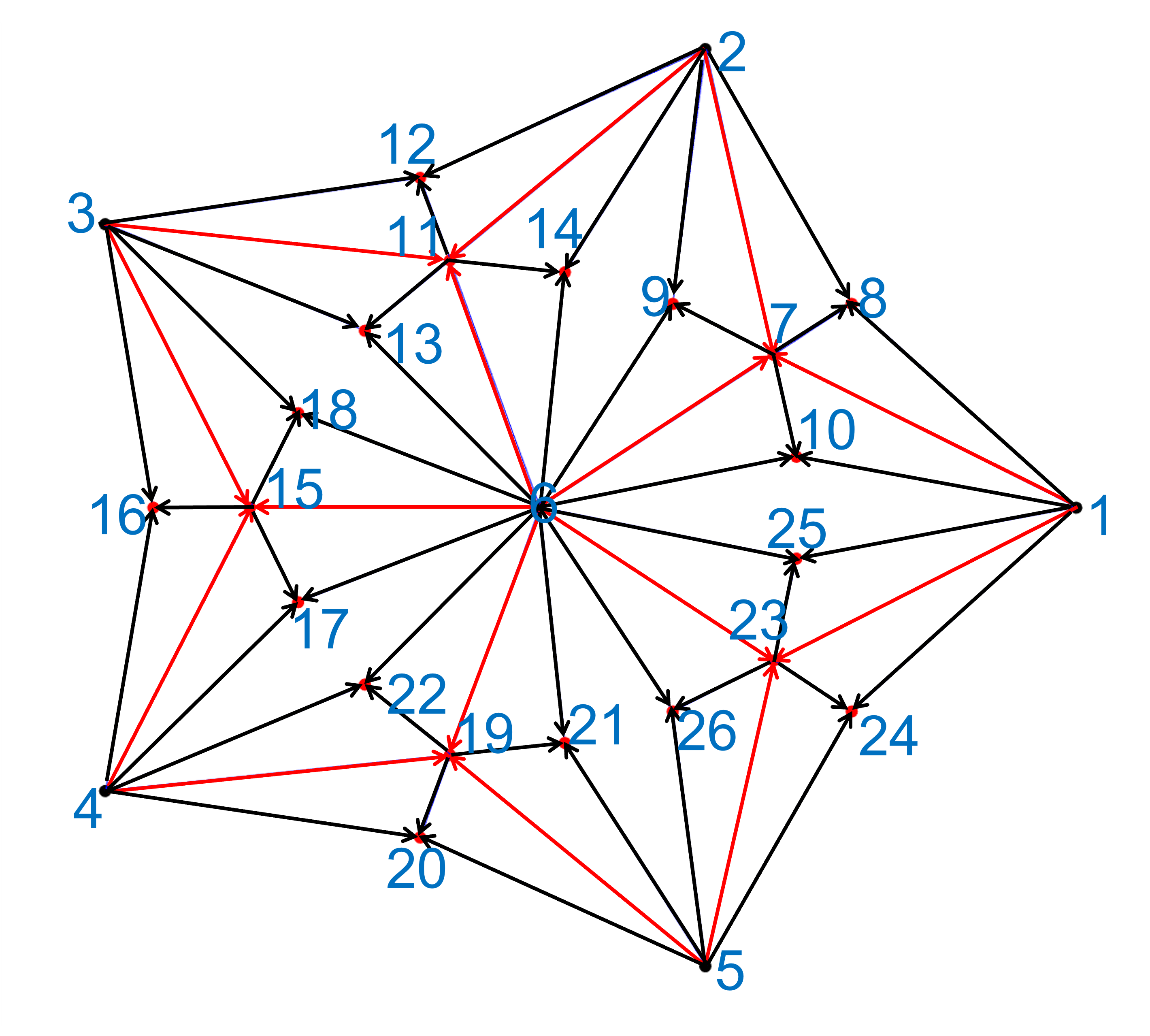}}
 \subfigure[]{\includegraphics[width=0.66\linewidth]{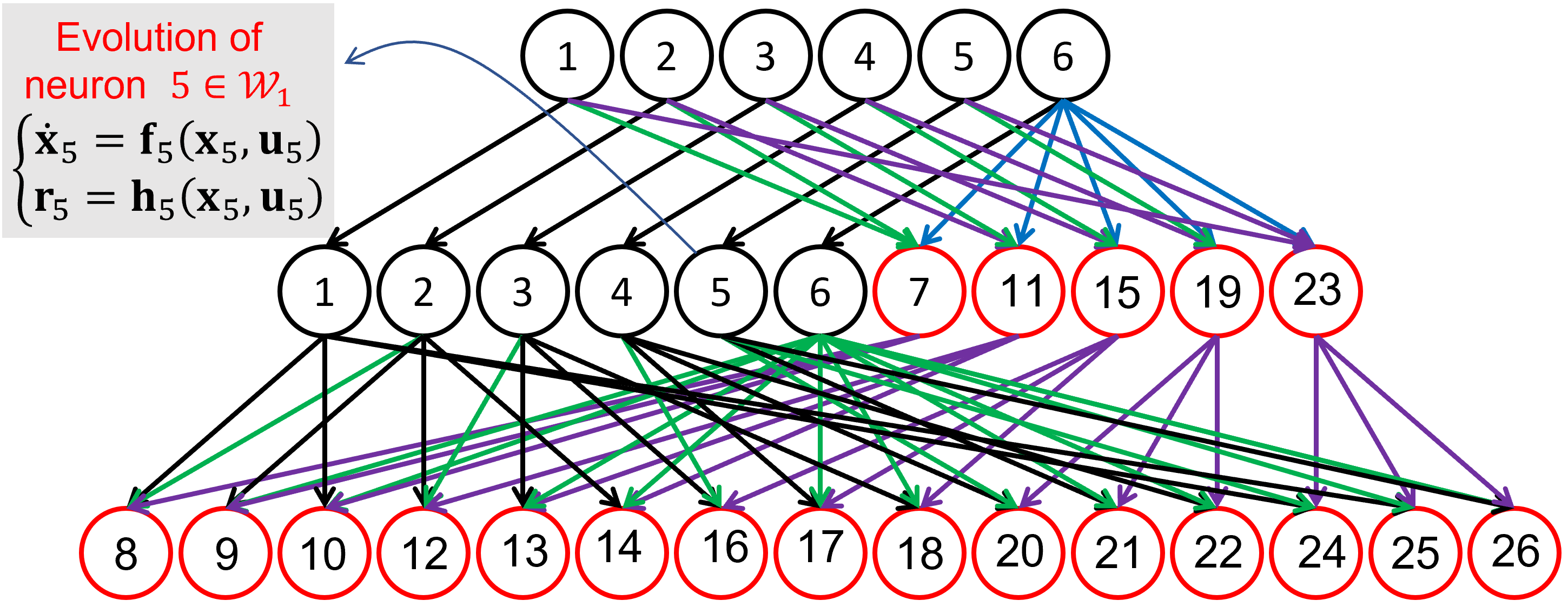}}
\caption{(a) Graph $\mathcal{G}\left(\mathcal{V},\mathcal{E}\right)$ representing the coverage neural network that consists g of three layers ($\mathcal{M}=\{0,1,2\}$), where $\mathcal{V}=\mathcal{W}_0\bigcup\mathcal{W}_1\bigcup\mathcal{W}_2$. (b) The agent team configuration suggested for a $2$-D coverage.}
\label{Spatial-Temporal-0}
\end{figure}

We note that $\mathcal{N}_i$ and $\mathcal{I}_{i,l}$ can be related by
\begin{equation}
    \bigwedge_{l\in \mathcal{M}\setminus \left\{0\right\}}\bigwedge_{i\in \mathcal{W}_l- \mathcal{W}_{l-1}}\left(\mathcal{I}_{i,l}=\mathcal{N}_i\right).
\end{equation}

For better clarification, we consider an agent team with $N=26$ agents identified by set $\mathcal{V}=\left\{1,\cdots,26\right\}$ forming a two-dimensional configuration ($n=2$)  shown in Fig. \ref{Spatial-Temporal-0} (a). The inter-agent communications shown in Fig. \ref{Spatial-Temporal-0} (a)  can be represented by the neural network of Fig. \ref{Spatial-Temporal-0} (b) with three layers $\mathcal{M}=\left\{0,1,2\right\}$, where $\mathcal{W}_0=\left\{1,\cdots,6\right\}$, defining the boundary and core leaders, has no in-neighbors  $\mathcal{W}_2=\left\{8,9,10, 12,13,14,16,17,18,20,21,22, 24,25,26\right\}$ defining followers, each has three in-neighbors. Also, $\left\{7,11,15,19,23\right\}\in \mathcal{W}_1$ each has three in-neighbors but the remaining neurons of $\left\{1,\cdots,6\right\}$,  that are repeated in layer $0$, each has one in-neighbor.

\begin{figure}[ht]
\center
\includegraphics[width=3.3 in]{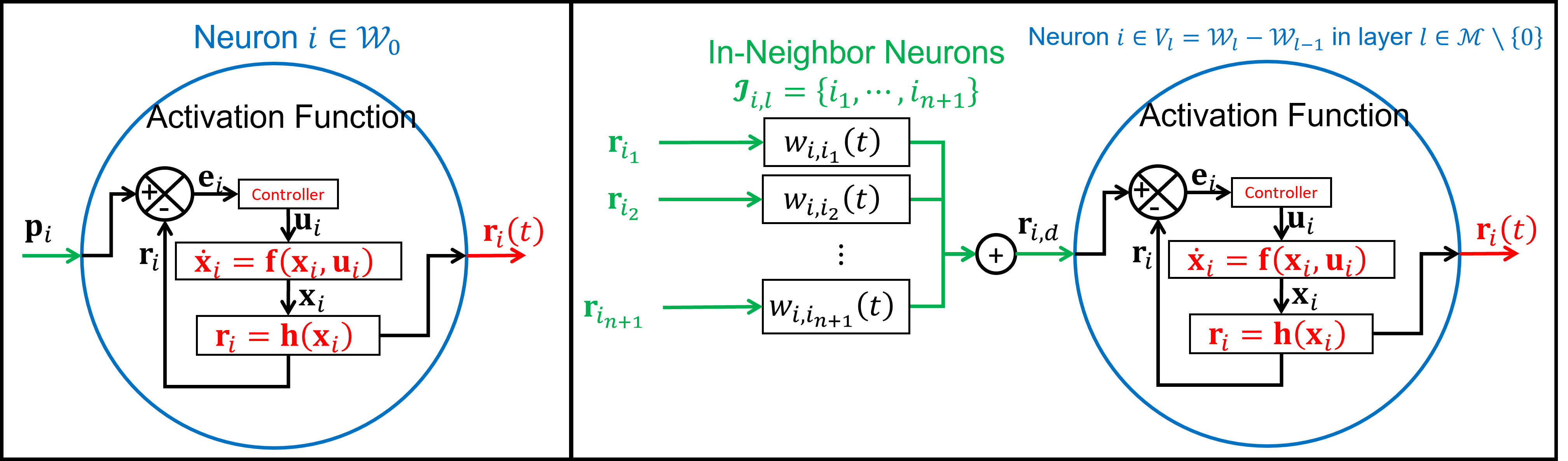}
\caption{Activation functions, inputs and outputs of neurons of the coverage neural network.  (Left): An example neuron in $i\in\mathcal{W}_0$. (Right): An example neuron $i\in \mathcal{W}_l-\mathcal{W}_{l-1}$ in layer $l\in \mathcal{M}\setminus \left\{0\right\}$ }
\label{milestoes}
\end{figure}

\subsection{Differential Activation Function} Unlike the available neural network, the activation of the coverage network's neurons are operated differential activation functions given by  nonlinear dynamics
\begin{equation}\label{rid}
    \begin{cases}
        \dot{\mathbf{x}}_i=\mathbf{f}_i\left(\mathbf{x}_i,\mathbf{u}_i\right)\\
        \mathbf{r}_i=\mathbf{h}_i\left(\mathbf{x}_i\right)
    \end{cases}
    ,\qquad i\in \mathcal{W}_l,~l\in \mathcal{M},
\end{equation}
that  is used to model the agent $i\in \mathcal{V}_l$ (See Fig. \ref{milestoes}), where  $\mathbf{x}_i\in \mathbb{R}^{n_{x,i}}$  and $\mathbf{u}_i\in \mathbb{R}^{n_{u,i}}$ denote the state vector and the control of neuron $i$, respectively, and $\mathbf{h}_i:\mathbb{R}^{n_{x,i}}\rightarrow  \mathbb{R}^3$, $\mathbf{f}_i:\mathbb{R}^{n_{x,i}}\rightarrow  \mathbb{R}^{n_{x,i}}$, and $\mathbf{g}_i:\mathbb{R}^{n_{x,i}}\rightarrow  \mathbb{R}^{n_{x,i}\times n_{u,i}}$ are smooth functions.

\textit{The output of neuron $i$} denoted by $\mathbf{r}_i\in \mathbb{R}^{3\times 1}$ is the position of agent $i$. \textit{The input of neuron $i$} is defined by 
\begin{equation}\label{riddd}
    \mathbf{r}_{i,d}(t) = \begin{cases} 
    \mathbf{p}_i\quad \textrm{(given)} & i \in \mathcal{W}_0 \\
    \sum_{j \in \mathcal{I}_{i,l}} w_{ij}(t)\mathbf{r}_j(t) & i \in \mathcal{W}_l-\mathcal{W}_{l-1},~l\in \mathcal{M}\setminus \left\{0\right\}
    \end{cases}
\end{equation}
where $\mathbf{p}_i$ is a desired  constant position for leader agent $i \in \mathcal{W}_0$. 
Also, $w_{i,j}(t) > 0$ is the time-varying communication weight between $i\in \mathcal{W}_l$ and $j \in \mathcal{I}_{i,l}$, and satisfies the following constraint:
\begin{equation}\label{weq}
    \bigwedge_{l\in \mathcal{M}\setminus \left\{0\right\}}\bigwedge_{i\in \mathcal{W}_l- \mathcal{W}_{l-1}}\left(\sum_{j\in \mathcal{I}_{i,l}}w_{i,j}(t)=1\right),\qquad \forall t.
\end{equation}



\subsection{Objectives} Given above problem setting, this paper offers a neural-network-based method  for optimal coverage of  target set $\mathcal{D}$  with unknown distribution in a $3$-dimensional motion space.  To achieve this objective, we assume that positions of boundary leader agents, defined by $\mathcal{W}_0\setminus \mathcal{V}_C$, are known, and solve the following two main problems:
\begin{enumerate}
    \item \textbf{Problem 1--Abstract Representation of Target:} We develop a mass-centric approach in Section \ref{Assignment of Leaders' Positions} to abstractly represent target by $N-N_B+1$ position vectors $\mathbf{p}_{N_B+2}$ through $\mathbf{p}_N$ that are considered as followers' desired  positions.
    \item \textbf{Problem 2--Decentralized Target Acquisition:} We propose a forward method to train the communication weights $w_{i,j}(t)$, and assign control input $\mathbf{u}_i$, for every agent $i\in \mathcal{V}$ and in-neighbor agent $j\in \mathcal{I}_{i,l}$,  such that actual position $\mathbf{r}_i$ converges to the desired position $\mathbf{p}_i$ in a decentralized fashion, for every $i\in \mathcal{V}\setminus \mathcal{W}_0$, where $i\in \mathcal{V}\setminus \mathcal{W}_0$ does \underline{\textbf{not}} know global position $\mathbf{p}_j(t)$ of any in-neighbor agent $j\in \mathcal{V}$.
\end{enumerate}


 Without loss of generality, $n$ is either $2$, or $3$ because motion space is three-dimensional. More specifically, for ground coverage $n=2$ and $\mathcal{D}$
 specifies finite number of targets on the ground.
\section{Methodology}\label{Methodology}

The agent team is aimed to cover a zone that is specified by $\mathcal{D}=\left\{1,\cdots,n_d\right\}$,  where $\mathbf{d}_i\in \mathbb{R}^{3\times1}$ is the position of target $i\in \mathcal{D}$. 
We also define \textit{intensity} function $\mathcal{T}:\mathcal{D}\rightarrow \left(0,1\right]$ to quantify the intensity of data point $i\in \mathcal{D}$ positioned at $i\in \mathcal{D}$.

For development of the neural-networ-based coverage model, we apply the following Definitions and Assumptions:
\begin{assumption}\label{leadersrankcondition}
    Boundary leader agents form an $n-D$ polytope in $\mathbb{R}^{n}$, thus, the boundary agents' desired positions 
    must satisfy the following rank condition:
    \begin{equation}
        \mathrm{rank}\left(\begin{bmatrix}
            \mathbf{p}_2-\mathbf{p}_1&\cdots&\mathbf{p}_{N_B}-\mathbf{p}_1
        \end{bmatrix}\right)
        =n
    \end{equation}    
\end{assumption}
The polytope defined by the boundary agents is called \textit{leading polytope}. 
\begin{assumption}\label{assumsimplex}
The leading polytope, defined by the boundary agents, can be decomposed into $N_L$ disjoint $n$-dimensional simplexes all sharing the core node $N_B+1\in \mathcal{W}_0$. 
\end{assumption}
We let $\mathcal{L}=\left\{1,\cdots,N_L\right\}$ define all simplex cells of the leading polytope, where $\mathcal{S}_i=\left\{h_{i,1},\cdots,h_{i,n},N_B+1\right\}$ defines vertices of simplex cell $i\in \mathcal{L}$, i.e. $h_{i,1},\cdots,h_{n,i}\in \mathcal{S}_i\setminus \left\{N_B+1\right\}\subset \mathcal{W}_0$ are the boundary nodes of simplex $i\in \mathcal{L}$. Per Assumption \ref{assumsimplex}, we can write
\begin{subequations}
    \begin{equation}
        \mathcal{W}_0=\bigcup_{i\in \mathcal{L}}\mathcal{S}_i,
    \end{equation}
     \begin{equation}
        \bigwedge_{i\in \mathcal{L}}\left(\mathrm{rank}\left(\begin{bmatrix}
            \mathbf{p}_{h_{i,1}}-\mathbf{p}_{N_B+1}&\cdots& \mathbf{p}_{h_{i,n}}-\mathbf{p}_{N_B+1}
        \end{bmatrix}\right)
        =n\right).
    \end{equation}    
\end{subequations}

\begin{algorithm}
  \caption{Assignment of followers' ``desired'' communication weights and ``desired'' positions.}\label{alg2}
  \begin{algorithmic}[1]
          \State \textit{Get:} Dataset $\mathcal{D}$; set $\mathcal{W}_0$, $\cdots$, $\mathcal{W}_M$; and graph $\mathcal{G}\left(\mathcal{V},\mathcal{E}\right)$.
         \State \textit{Obtain:} Desired position of every agent $i\in \mathcal{V}\setminus \mathcal{W}_0$ and desired communication weights.
          \For{\texttt{ $l\in \mathcal{M}\setminus \left\{0\right\}$}}
            \For{\texttt{ $i\in \mathcal{W}_l-\mathcal{W}_{l-1}$}}
                \State Obtain $\bar{\mathcal{D}}_i$ by using Eq. \eqref{bardi}.        
                \State Choose $\mathbf{p}_i$ as the centroid of $\bar{\mathcal{D}}_i$.
                \State Get desired positions of in-neighbor agents $\mathcal{I}_{i,l}$.   
                \State Compute desired communication weights by \eqref{desiredcomweights}.   
           \EndFor
        \EndFor    
  \end{algorithmic}
\end{algorithm}

\begin{assumption}\label{In-neighbor-three}
Every agent $i\in \mathcal{V}\setminus \mathcal{W}_0$ has $n+1$ in-neighbors, therefore,
\begin{equation}
    \bigwedge_{l\in \mathcal{M}\setminus\left\{0\right\}}\bigwedge_{i\in \mathcal{W}_l-\mathcal{W}_{l-1}}\left(\left|\mathcal{I}_{i,l}\right|=n+1\right).
\end{equation}
\end{assumption}
\begin{assumption}\label{nonsingularity}
The in-neighbors of every agent $i\in \mathcal{V}\setminus \mathcal{W}_0$ defined by $\mathcal{N}_i=\left\{j_1,\cdots,j_{n+1}\right\}$ forms an $n$-D simplex. This condition can be formally specified as follows:
 \begin{equation}
        \bigwedge_{l\in \mathcal{M}\setminus\left\{0\right\}}\bigwedge_{i\in \mathcal{W}_l-\mathcal{W}_{l-1}}\left(\mathrm{rank}\left(\begin{bmatrix}
            \mathbf{p}_{j_2}-\mathbf{p}_{j_1}&\cdots&\mathbf{p}_{{j_{n+1}}}-\mathbf{p}_{j_1}
        \end{bmatrix}\right)
        =n\right).
    \end{equation}    

\end{assumption}
\begin{definition}\label{deconvex}
For every agent $i\in \mathcal{V}\setminus \mathcal{W}_0$,
\begin{subequations}
\begin{equation}\label{barci}
 \resizebox{0.99\hsize}{!}{%
$
    \bar{\mathcal{C}}_i=\left\{\sum_{j\in \mathcal{I}_{i,l}}\sigma_j\mathbf{p}_j:\sigma_j\geq0~\mathrm{and}~\sum_{j\in \mathcal{I}_{i,l}}\sigma_j=1\right\},~i\in \mathcal{W}_l-\mathcal{W}_{l-1},~l\in \mathcal{M},
    $}
\end{equation}
\begin{equation}
 \resizebox{0.99\hsize}{!}{%
$
    {\mathcal{C}}_i(t)=\left\{\sum_{j\in \mathcal{I}_{i,l}}\sigma_j\mathbf{r}_j(t):\sigma_j\geq0~\mathrm{and}~\sum_{j\in \mathcal{I}_{i,l}}\sigma_j=1\right\},~i\in \mathcal{W}_l-\mathcal{W}_{l-1},~l\in \mathcal{M},
    $}
\end{equation}
\end{subequations}
define the convex hulls specified by ``desired'' and ``actual'' positions of agent $i$'s in-neighbors, respectively.
\end{definition}
\begin{definition}
    We define 
    \begin{subequations}        
    \begin{equation}
        \mathcal{C}=\bigcup_{l\in \mathcal{M}\setminus \left\{0\right\}}\bigcup_{i\in \mathcal{W}_l-\mathcal{W}_{l-1}}\bar{\mathcal{C}}_i
    \end{equation}
    \begin{equation}
        \mathcal{C}=\bigcup_{l\in \mathcal{M}\setminus \left\{0\right\}}\bigcup_{i\in \mathcal{W}_l-\mathcal{W}_{l-1}}\mathcal{C}_i(t)
    \end{equation}
    \end{subequations}
    specify the coverage zone that enclose all data points defined by set $\mathcal{D}$.
\end{definition}

By considering Definition \ref{deconvex}, we can express set $\mathcal{D}$ as
\[
        {\mathcal{D}}=\bigcup_{i\in \mathcal{I}_{i,l}}\bar{\mathcal{D}}_i\qquad \mathrm{or}\qquad \mathcal{D}=\bigcup_{i\in \mathcal{I}_{i,l}}{\mathcal{D}}_i(t),
   \]
   where 
   \begin{equation}\label{bardi}
    \bar{\mathcal{D}}_i=\left\{j\in \mathcal{D}:\mathbf{d}_j\in \bar{\mathcal{C}}_i\right\},
\end{equation}
   is the target set that is ``desired'' to be searched by follower agent $i\in \mathcal{V}\setminus \mathcal{V}_0$ whereas  
   \begin{equation}\label{di}
   {\mathcal{D}}_i(t)=\left\{{j\in \mathcal{D}:\mathbf{d}_j\in \mathcal{C}}_i(t)\right\},
\end{equation} 
is the subset of $\mathcal{D}$ that is ``actually'' searched by follower agent $i\in \mathcal{V}\setminus \mathcal{V}_0$ at time $t$. Note that $\bar{\mathcal{D}}_i$ and ${\mathcal{D}}_i(t)$ are enclosed by the  convex hulls $\bar{\mathcal{C}}_i$ and ${\mathcal{C}}_i(t)$, respectively, that are determined by the ``desired'' and ``actual'' positions of the agent $i\in \mathcal{V}\setminus \mathcal{W}_0$, respectively.

\begin{assumption}\label{keyassumption}
    We assume that $\bar{\mathcal{D}}_i\neq \emptyset$ and $\mathcal{D}_i(t)\neq \emptyset$, at any time $t$, for every $i\in \mathcal{V}\setminus \mathcal{W}_0$.
\end{assumption}
In order to assure that Assumption \ref{keyassumption} is satisfied, we may need to regenerate target set $\mathcal{D}$, when target data set $\mathcal{D}$ is scarcely distributed. When this regeneration is needed, we first convert discrete set $\mathcal{D}$ to discrete set
\begin{equation}
    \mathcal{D}'=\left\{\mathbf{d}=\sum_{i=1}^{n_d}\mathcal{N}\left(\mathbf{r}; \mathbf{d}_i,\mathbf{\Sigma}_i\right):\mathbf{d}_i\in \mathcal{D},~\mathbf{r}\in \mathcal{C}\right\}
\end{equation}
where $\mathcal{N}\left(\mathbf{r}; \mathbf{d}_i,\mathbf{\Sigma}_i\right)$ is a multi-variate normal distribution specified by mean vector $\mathbf{d}_i$ and covariance matrix $\mathbf{\Sigma}_i$. Then, we regenerate $\mathcal{D}$ by uniform dicretization of $\mathcal{D}$.




\subsection{Abstract Representation of Target Locations}\label{Assignment of Leaders' Positions}
 We use the approach presented in Algorithm \ref{alg2} to abstractly represent target set $\mathcal{D}$ by position vectors $\mathbf{p}_{N_B+2}$, $\cdots$, $\mathbf{p}_N$, given (i)
 desired positions of leader agents denoted $\mathbf{p}_1$ through $\mathbf{p}_{N_B+1}$, (ii) the edge set $\mathcal{E}$, and (iii) target set $\mathcal{D}$, as the input.  Note that $\mathbf{p}_i$ is considered the global desired position of follower  $i\in \mathcal{V}_I=\left\{N_B+2,\cdots,N\right\}$, but no follower $i\in \mathcal{V}\setminus \mathcal{V}_0$ knows $\mathbf{p}_i$.

The desired position of every follower agent $i\in \mathcal{V}_I=\mathcal{V}\setminus \mathcal{W}_0$ is obtained by
\begin{equation}\label{followerdesired position}
    {\mathbf{p}}_{i}=
        {\sum_{h\in \bar{\mathcal{D}}_i}{\mathcal{T}_h\left(h\right)\mathbf{d}}_h\over \left|\bar{\mathcal{D}}_i\right|},\qquad \forall i\in \mathcal{V}\setminus \mathcal{W}_0,
\end{equation}
where $\bar{\mathcal{D}}_i$, defined by Eq. \eqref{bardi}, is a target data subset that is  enclosed by $\bar{\mathcal{C}}_i$ and defined by Eq. \eqref{barci}. We notice that the desired position of every follower agent $i\in \mathcal{V}\setminus \mathcal{W}_0$ is assigned in a ``forward'' manner which in turn implies that $\mathcal{W}_l$'s desired positions are assigned after determining  $\mathcal{W}_{l-1}$'s desired positions, for every $l\in \mathcal{M}\setminus \left\{0\right\}$.

Given desired positions of every follower agent $i\in \mathcal{V}\setminus \mathcal{W}_0$ and every in-neighbor agent $j\in \mathcal{N}_i$, $\varpi_{i,j}>0$ defines \textbf{the desired communication weight} between $i\in \mathcal{V}\setminus \mathcal{W}_0$ and $j\in \mathcal{N}_i$, and is obtained by solving $n+1$ linear algebraic equations provided by
\begin{subequations}\label{desiredcomweights}
    \begin{equation}
        \mathbf{p}_i=\sum_{j\in \mathcal{I}_{i,l}}\varpi_{i,j}\mathbf{p}_j,
    \end{equation}
        \begin{equation}
        \sum_{j\in \mathcal{I}_{i,l}}\varpi_{i,j}=1.
    \end{equation}
\end{subequations}
Algorithm \ref{alg2} also presents our proposed hierarchical approach for assignment of followers' desired communication weights.
\begin{definition}
    We define desired weight matrix $\bar{\mathbf{L}}=\left[\bar{L}_{ij}\right]\in \mathbb{R}^{N\times N}$ with $(i,j)$ entry
    \begin{equation}
        \bar{L}_{ij}=\begin{cases}
            \varpi_{i,j}&i\in \mathcal{V}\setminus \mathcal{W}_0,~j\in \mathcal{N}_i\\
            -1&i=j\\
            0&\mathrm{otherwise}
        \end{cases}
        .
    \end{equation}
\end{definition}

\subsection{Decentralized Target Acquisition}\label{Safe Decentralized Target Acquisition}
For a decentralized coverage, it is necessary that every follower agent $i\in \mathcal{V}_l=\mathcal{W}_l-\mathcal{W}_{l-1}$, represented by a neorn in layer $l\in \mathcal{M}\setminus \left\{0\right\}$, chooses control $\mathbf{u}_i\in \mathbb{R}^{n_u\times 1}$, based on actual positions of the in-neighbor agents $\mathcal{I}_{i,l}$, such that $\mathbf{r}_i(t)$ stably tracks $\mathbf{r}_{i,d}(t)$ that is defined by Eq. \eqref{riddd}.  Note that $\mathbf{r}_{i,d}(t)$ is a linear combination of the in-neighbors' actual positions, for $i\in \mathcal{V}\setminus \mathcal{W}_0$, with  (communication) weights that are time-varying and constrained to satisfy equality constraint \eqref{weq}. 
We use forward training to learn the coverage neural network. This means that communication weights of layer $l \in \mathcal{M}\setminus \left\{ 0 \right\}$ neurons are assigned before communication weights of layer $l+1 \in \mathcal{M}\setminus \left\{ 0,M\right\}$ neurons, where communication weight of neuron $i \in \mathcal{V}_l=\mathcal{W}_l-\mathcal{W}_{l-1}$ is learned by solving a quadratic program. Let
\begin{equation}\label{followerdesired position}
    \bar{\mathbf{r}}_{i}(t)=
        {\sum_{h\in \mathcal{D}_i(t)}{\mathcal{T}(h)\mathbf{d}}_h(t)\over \left|\mathcal{D}_i(t)\right|},\qquad i\in \mathcal{V}_l,~l\in \mathcal{M}\setminus\left\{0\right\},
\end{equation}
denote the cetroid of subset set $\mathcal{D}_i(t)\subset \mathcal{D}$, where  $\mathcal{D}_i(t)\subset \mathcal{D}$ is defined (obtained) by Eq. \eqref{di}. Then, followers' communication weights are determined by minimizing
\begin{equation}\label{QP}
    \min \|\sum_{h\in \mathcal{I}_{i,l}}w_{i,h}(t)\mathbf{r}_j-\bar{\mathbf{d}}_{i}(t\|^2
\end{equation}
subject to equality constraint \eqref{weq}.

\begin{definition}
    We define  weight matrix ${\mathbf{L}}=\left[{L}_{ij}\right]\in \mathbb{R}^{N\times N}$ with $(i,j)$ entry
    \begin{equation}
        {L}_{ij}=\begin{cases}
            w_{i,j}&i\in \mathcal{V}\setminus \mathcal{W}_0,~j\in \mathcal{N}_i\\
            -1&i=j\\
            0&\mathrm{otherwise}
        \end{cases}
        .
    \end{equation}
\end{definition}

\begin{theorem}
Assume every agent $i\in \mathcal{V}$ chooses control input $\mathbf{u}_i$ such that $\mathbf{r}_i(t)$ asymptotically tracks $\mathbf{r}_{i,d}(t)$. Then, $\mathbf{r}_i(t)$ asymptotically converges to the desired position $\mathbf{p}_i$ for every $i\in \mathcal{V}$.
\end{theorem}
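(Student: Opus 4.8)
The plan is to exploit the layered (feedforward) structure of the coverage neural network and argue by induction on the layer index $l\in\mathcal{M}$. The base case is trivial: for $i\in\mathcal{W}_0$ we have $\mathbf{r}_{i,d}(t)=\mathbf{p}_i$ by \eqref{riddd}, so the assumed asymptotic tracking $\mathbf{r}_i(t)\to\mathbf{r}_{i,d}(t)$ gives $\mathbf{r}_i(t)\to\mathbf{p}_i$ directly. For the inductive step, suppose every agent $j$ represented in $\mathcal{W}_{l-1}$ satisfies $\mathbf{r}_j(t)\to\mathbf{p}_j$. Consider $i\in\mathcal{W}_l-\mathcal{W}_{l-1}$; its input is $\mathbf{r}_{i,d}(t)=\sum_{j\in\mathcal{I}_{i,l}}w_{i,j}(t)\mathbf{r}_j(t)$ with $\mathcal{I}_{i,l}\subset\mathcal{W}_{l-1}$. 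The two facts I would combine are: (a) each $\mathbf{r}_j(t)\to\mathbf{p}_j$ by the inductive hypothesis, and (b) the weights satisfy the simplex constraint \eqref{weq}, so $\sum_j w_{i,j}(t)=1$ with $w_{i,j}(t)\ge 0$, hence $\mathbf{r}_{i,d}(t)$ stays in a bounded set and its distance to the (time-varying) convex hull $\mathcal{C}_i(t)$ is controlled. One then needs that $\mathbf{r}_{i,d}(t)$ itself converges, and that its limit is exactly $\mathbf{p}_i$.

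The key technical step is to show the weights $w_{i,j}(t)$ converge to the desired weights $\varpi_{i,j}$ defined by \eqref{desiredcomweights}. Here is where Assumptions \ref{In-neighbor-three} and \ref{nonsingularity} enter: since $\mathcal{I}_{i,l}$ has $n+1$ elements whose desired positions form a nondegenerate $n$-simplex, the linear system \eqref{desiredcomweights} has a \emph{unique} solution $\{\varpi_{i,j}\}$, i.e. $\mathbf{p}_i$ has unique barycentric coordinates with respect to $\{\mathbf{p}_j:j\in\mathcal{I}_{i,l}\}$. The weight $w_{i,j}(t)$ is the minimizer of the quadratic program \eqref{QP}–\eqref{weq}, which drives $\sum_j w_{i,j}(t)\mathbf{r}_j(t)$ toward the centroid $\bar{\mathbf{r}}_i(t)$ of $\mathcal{D}_i(t)$. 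I would argue that as $\mathbf{r}_j(t)\to\mathbf{p}_j$ for all $j\in\mathcal{I}_{i,l}$, the actual hull $\mathcal{C}_i(t)\to\bar{\mathcal{C}}_i$, hence $\mathcal{D}_i(t)\to\bar{\mathcal{D}}_i$ and $\bar{\mathbf{r}}_i(t)\to\mathbf{p}_i$ by \eqref{followerdesired position}; then by continuity of the (unique, by nondegeneracy) solution of the QP in its data, $w_{i,j}(t)\to\varpi_{i,j}$, so $\mathbf{r}_{i,d}(t)=\sum_j w_{i,j}(t)\mathbf{r}_j(t)\to\sum_j\varpi_{i,j}\mathbf{p}_j=\mathbf{p}_i$. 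Combining with the assumed tracking $\mathbf{r}_i(t)-\mathbf{r}_{i,d}(t)\to 0$ closes the induction.

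The main obstacle I anticipate is the interdependence hidden in that last paragraph: $\mathbf{r}_{i,d}(t)$ depends on the $\mathbf{r}_j(t)$, whose convergence is only asymptotic, so one is chaining a sequence of asymptotic limits through a continuous-time cascade and must ensure the perturbations do not accumulate. Making this rigorous requires either an explicit input-to-state-stability argument for the tracking-error cascade along the finitely many layers, or a careful $\varepsilon$–$\delta$ bookkeeping showing that $\|\mathbf{r}_j(t)-\mathbf{p}_j\|$ small for all $j\in\mathcal{I}_{i,l}$ and all large $t$ forces $\|\bar{\mathbf{r}}_i(t)-\mathbf{p}_i\|$ and hence $\|\mathbf{r}_{i,d}(t)-\mathbf{p}_i\|$ small. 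A secondary subtlety is the discreteness of $\mathcal{D}_i(t)$: as the hull $\mathcal{C}_i(t)$ moves, points of $\mathcal{D}$ can enter or leave $\mathcal{D}_i(t)$, so $\bar{\mathbf{r}}_i(t)$ need not be continuous in $t$; I would handle this by invoking Assumption \ref{keyassumption} (nonempty cells) together with a genericity argument that the limiting hull $\bar{\mathcal{C}}_i$ has no target points exactly on its boundary, so the membership stabilizes for large $t$ and $\mathcal{D}_i(t)=\bar{\mathcal{D}}_i$ eventually.
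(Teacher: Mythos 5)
Your proposal follows essentially the same route as the paper's proof: a forward induction through the layers, with the base case given by the constant leader inputs $\mathbf{r}_{i,d}(t)=\mathbf{p}_i$ for $i\in\mathcal{W}_0$, and the inductive step arguing that convergence of the in-neighbors' positions makes the actual hull $\mathcal{C}_i(t)$ converge to $\bar{\mathcal{C}}_i$, hence $\mathcal{D}_i(t)$ to $\bar{\mathcal{D}}_i$ and $\mathbf{r}_{i,d}(t)$ to $\mathbf{p}_i$, which combined with the assumed tracking closes the induction. Your write-up is in fact more explicit than the paper's one-paragraph argument (unique barycentric weights from the nondegeneracy assumptions, continuity of the QP solution, the cascade of asymptotic limits, and the possible jumps of $\bar{\mathbf{r}}_i(t)$ as data points enter or leave $\mathcal{D}_i(t)$), but the underlying approach is the same.
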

\begin{proof}
If every agent $j\in \mathcal{W}_0$ asymptotically tracks $\mathbf{r}_{j,d}(t)$, then, actual position $\mathbf{r}_j$ converges to $\mathbf{p}_j$ because $\mathbf{r}_{j,d}(t)=\mathbf{p}_j$ is constant per Eq. \eqref{riddd}. Then, for every $i\in \mathcal{W}_1$, vertices of the simplex $\bar{\mathcal{C}}_i$, belonging to $\mathcal{W}_0$, asymptotically converge to the vertices $\bar{\mathcal{C}}_i$, where $\bar{\mathcal{C}}_i$ and $\mathcal{C}_i$ enclose target data subsets $\bar{\mathcal{D}}_i$ and $\mathcal{D}_i$, respectively. This implies that $\mathbf{r}_{i,d}(t)$, defined as the centroid of $\mathcal{D}_i(t)$ asymptotically converges to $\mathbf{p}_i$ for every $i\in \mathcal{W}_1$. By extending this logic, we can say that this convergence is propagated through the feedforward network $\mathcal{G}\left(\mathcal{V},\mathcal{E}\right)$.  As the result, for every agent $i\in \mathcal{W}_l$ and layer $l\in \mathcal{M}\setminus \left\{0\right\}$, vertices of simplex $\mathcal{C}_i(t)$ asymptotically converge the vertices of   $\bar{\mathcal{C}}_i$ which in turn implies that $\mathbf{r}_{i,d}(t)$ asymptotically converges to $\mathbf{p}_i$. This also implies that $\mathbf{r}_i$ asymptotically converges to $\mathbf{p}_i$ per the theorem's assumption.

\end{proof}

\section{Network Dynamics}\label{networkdynamics}
In this section, we suppose that every agent is a quacopter and 
use the input-state feedback linearization presented in  \cite{rastgoftar2021safe, asslouj2022quadcopter} and summerized in the Appendix to  model quadcopter motion by the fourth-order dynamics \eqref{quaddynamicsext} in the Appendix. Here, we propose  to choose $\mathbf{v}_i$ as follows:
\begin{equation}
    \mathbf{v}_i=-k_{1,i}\dddot{\mathbf{r}}_i-k_{2,i}\ddot{\mathbf{r}}_i-k_{3,i}\dot{\mathbf{r}}_i+k_{4,i}\left({\mathbf{r}}_{i,d}(t)-\mathbf{r}_i\right), \qquad i\in \mathcal{V},
\end{equation}
where $\mathbf{r}_{i,d}(t)$ is defined by Eq. \eqref{riddd}. Then, the external dynamics of the quadcopter team is given by \cite{rastgoftar2021safe, rastgoftar2022integration}
\begin{equation}
\label{mainlinearized}
\dfrac{d}{dt}\left(
\begin{bmatrix}
\mathbf{Y}\\
\dot{\mathbf{Y}}\\
\ddot{\mathbf{Y}}\\
\dddot{\mathbf{Y}}\\
\end{bmatrix}
\right)
=\mathbf{A}_{\mathrm{MQS}}
\begin{bmatrix}
\mathbf{Y}\\
\dot{\mathbf{Y}}\\
\ddot{\mathbf{Y}}\\
\dddot{\mathbf{Y}}\\
\end{bmatrix}
+
\mathbf{B}_{\mathrm{MQS}}
\begin{bmatrix}
\mathbf{R}_L\\
\dot{\mathbf{R}}_L\\
\ddot{\mathbf{R}}_L\\
\dddot{\mathbf{R}}_L\\
\end{bmatrix}
,
\end{equation}
where $\mathbf{Y}=\mathrm{vec}\left(\begin{bmatrix}
\mathbf{r}_1&\cdots&\mathbf{r}_N
\end{bmatrix}^T\right)$,  $\mathbf{R}_L=\mathrm{vec}\left(\begin{bmatrix}
\mathbf{p}_1&\cdots&\mathbf{p}_{N_B+1}
\end{bmatrix}^T\right)$, $\mathbf{L}_0=\begin{bmatrix}\mathbf{I}_{N_B+1}&\mathbf{0}_{\left(N_B+1\right)\times \left(N-N_B-1\right)}\end{bmatrix}^T\in \mathbb{R}^{N\times\left(N_B+1\right)}$,
\begin{subequations}
\begin{equation}
\mathbf{A}_{\mathrm{MQS}}=
    \begin{bmatrix}
\mathbf{0}&\mathbf{I}_{3N}&\mathbf{0}&\mathbf{0}\\
\mathbf{0}&\mathbf{0}&\mathbf{I}_{3N}&\mathbf{0}\\
\mathbf{0}&\mathbf{0}&\mathbf{0}&\mathbf{I}_{3N}\\
\mathbf{I}_3\otimes \left( \mathbf{K}_4 \mathbf{L}\right)&-\mathbf{K}_3\mathbf{I}_{3N}&-\mathbf{K}_2\mathbf{I}_{3N}&-\mathbf{K}_1\mathbf{I}_{3N}
\end{bmatrix}
,
\end{equation}
\begin{equation}
\resizebox{0.99\hsize}{!}{%
$
\mathbf{B}_{\mathrm{MQS}}=
    \begin{bmatrix}
\mathbf{0}&\mathbf{0}&\mathbf{0}&\mathbf{0}\\
\mathbf{0}&\mathbf{0}&\mathbf{0}&\mathbf{0}\\
\mathbf{0}&\mathbf{0}&\mathbf{0}&\mathbf{0}\\
\mathbf{I}_3\otimes\left( \mathbf{K}_{4} \mathbf{L}_0\right)&\mathbf{I}_3\otimes \left(\mathbf{K}_{3} \mathbf{L}_0\right)&\mathbf{I}_3\otimes\left( \mathbf{K}_2\mathbf{L}_0\right)&\mathbf{I}_3\otimes \left( \mathbf{K}_1 \mathbf{L}_0\right)
\end{bmatrix}
,
$
}
\end{equation}
\begin{equation}
    j=1,2,3,4,\qquad \mathbf{K}_j=\mathrm{diag}\left(k_{j,1},\cdots,k_{j,N}\right),
\end{equation}
\end{subequations}
$\mathbf{I}_{3N}\in \mathbb{R}^{3N\times 3N}$ is the identity matrix, and ``vec'' is the matrix vectorization operator.
Note that control gains $k_{j,i}$ ($i\in \mathcal{V}$ and $j=1,2,3,4$) are selected such that roots of the characteristic equation
\begin{equation}
\label{CharEq}
    \left|s^4\mathbf{I}+s^3\mathbf{K}_1\mathbf{L}+s^2\mathbf{K}_2\mathbf{L}+s\mathbf{K}_3+\mathbf{K}_4\right|=0
\end{equation}
are all located the collective dynamics \eqref{mainlinearized} is stable.  

\section{Simulation Results}\label{Simulation}
We consider an agent team consisting of $57$ quadcopters with the reference configuration shown in Fig. \ref{ReferenceConfiguration}, where we use the model and trajectory control presented in Refs. \cite{rastgoftar2021safe, asslouj2022quadcopter} for multi-agent coverage simulation. Here quadcopters $1$ through $4$ defined by set $\mathcal{V}_B=\left\{1,2,3,4\right\}$ are the boundary leader agents; agent $5$ defined by singleton $\mathcal{V}_C=\left\{5\right\}$ is core leader; and the remaining agents defined by $\mathcal{V}_I=\left\{6,\cdots,57\right\}$ are followers.

The inter-agent communications are directional and shown by blue vectors in Fig. \ref{ReferenceConfiguration}. The communication graph is defined by $\mathcal{G}\left(\mathcal{V},\mathcal{E}\right)$ and converted into the  neural network shown in Fig. \ref{NNResults} with four layers, thus,  $\mathcal{M}=\left\{0,1,2,3\right\}$ ($M=3$), and $\mathcal{V}$ can be expressed as $\mathcal{V}=\mathcal{W}_0\bigcup\mathcal{W}_1\bigcup\mathcal{W}_2\bigcup \mathcal{W}_3$.  In Fig. \ref{ReferenceConfiguration}, the agents represented by $\mathcal{W}_0$, $\mathcal{W}_1$, $\mathcal{W}_2$, and $\mathcal{W}_3$ are colored by cyan, red, green, and black, respectively.

\begin{figure}[ht]
    \includegraphics[width=\linewidth]{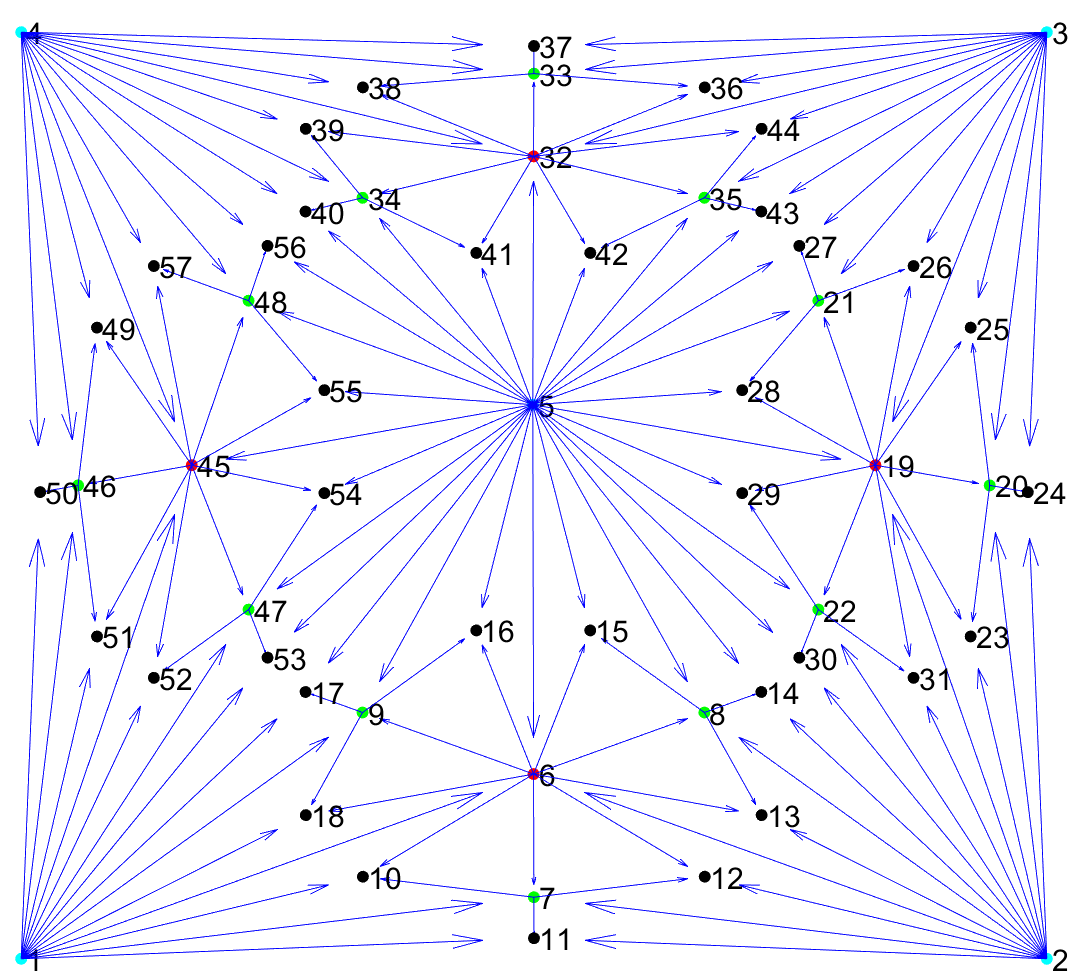}
    \centering
    \caption{Reference Configuration of the quadcopter team in a horizontal plane parallel to the $x-y$ plane. The inter-agent communication are directional and shown by blue arrays.}
    \label{ReferenceConfiguration}
\end{figure}
\begin{figure}[ht]
    \includegraphics[width=\linewidth]{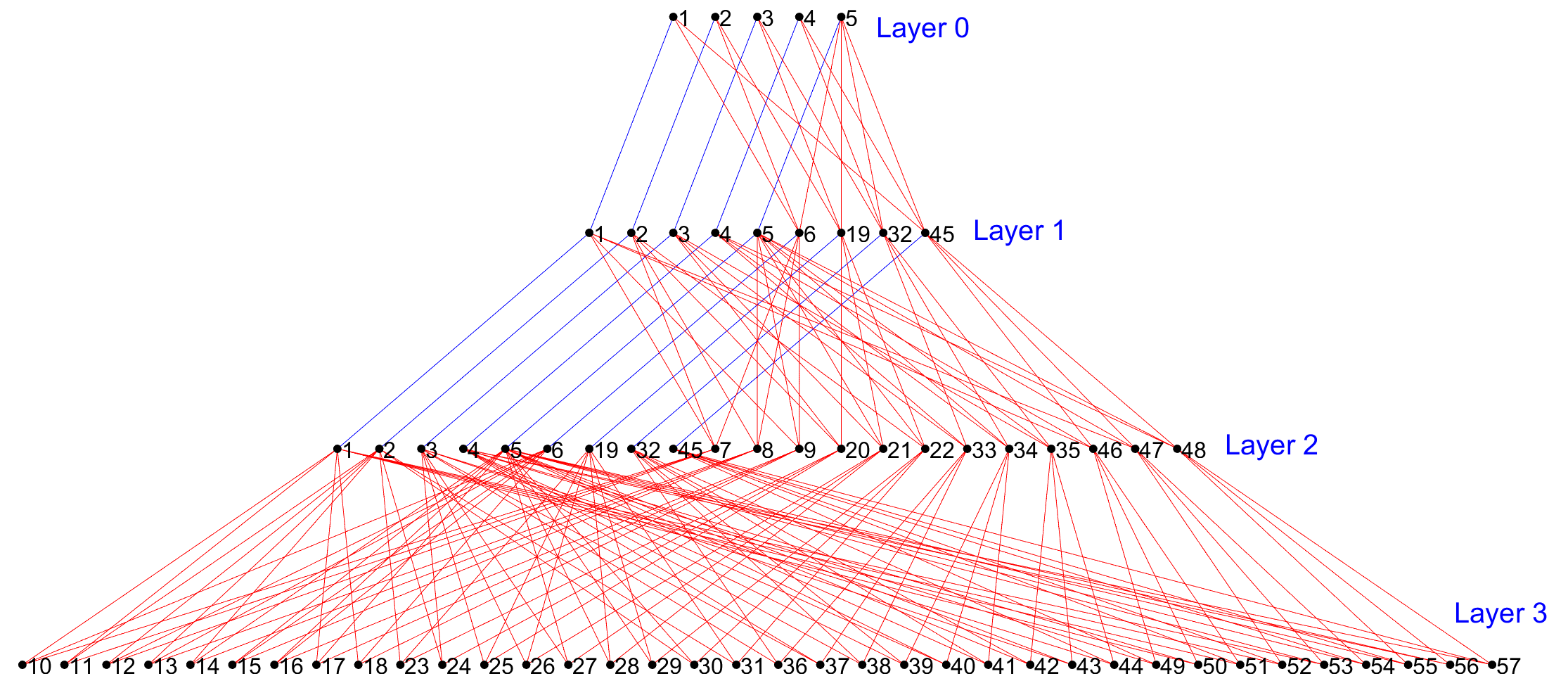}
    \centering
    \caption{The feed-forward network used for specifying inter-agent communication among the quadcopter team.}
    \label{NNResults}
\end{figure}

\begin{figure}[h]
\centering
 \subfigure[]{\includegraphics[width=0.98\linewidth]{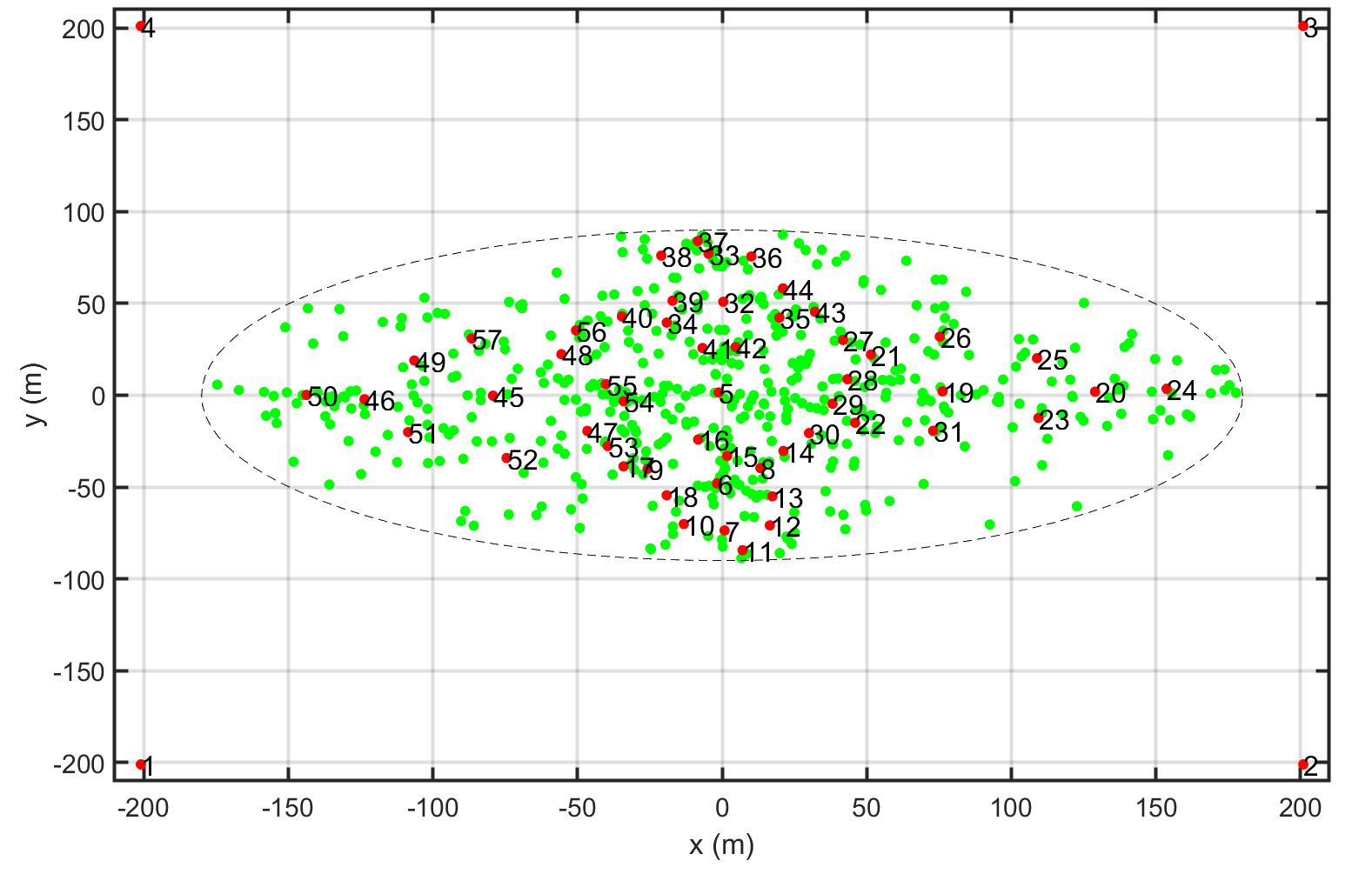}}
 \subfigure[]{\includegraphics[width=0.98\linewidth]{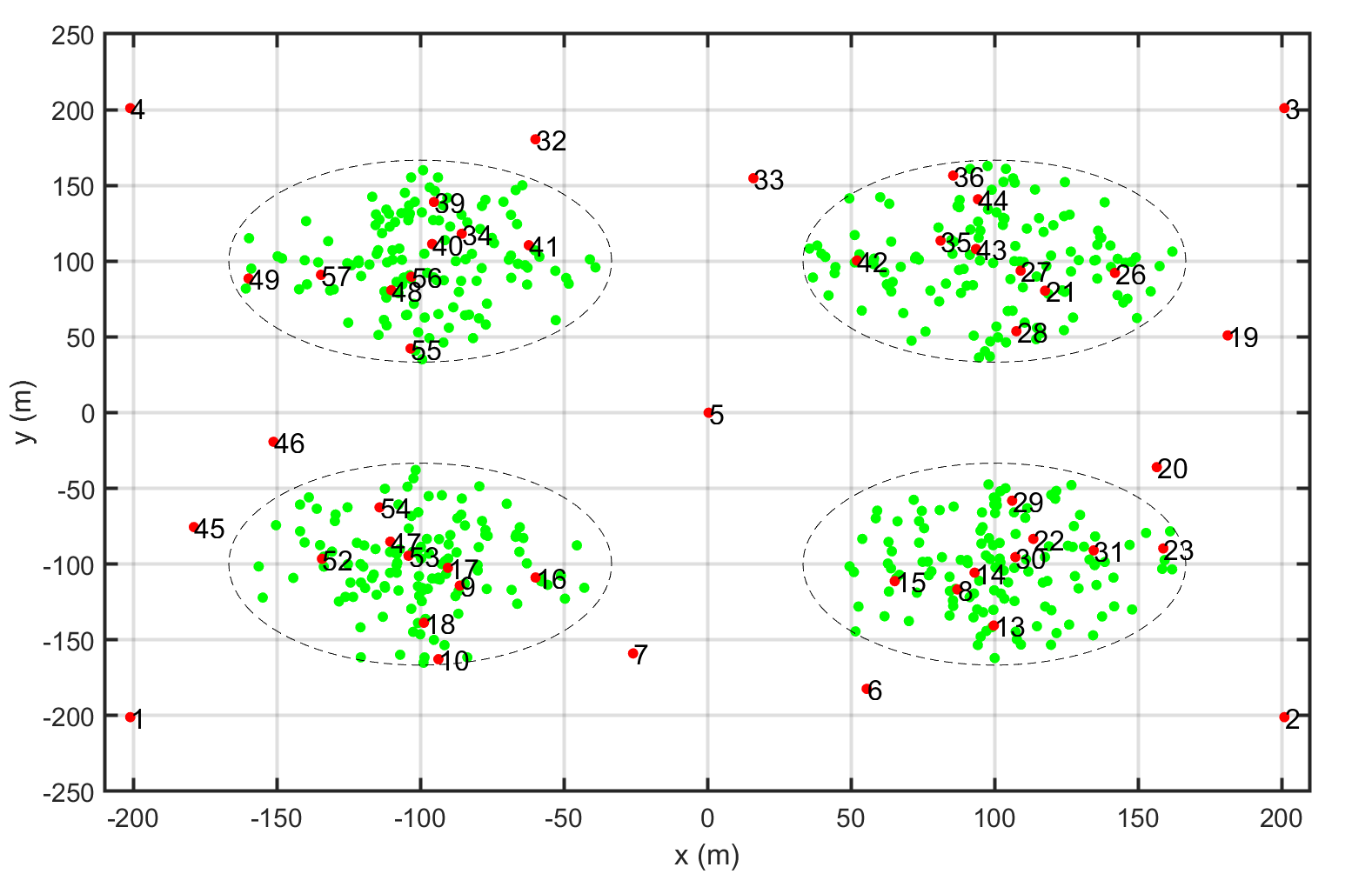}}
 \subfigure[]{\includegraphics[width=0.98\linewidth]{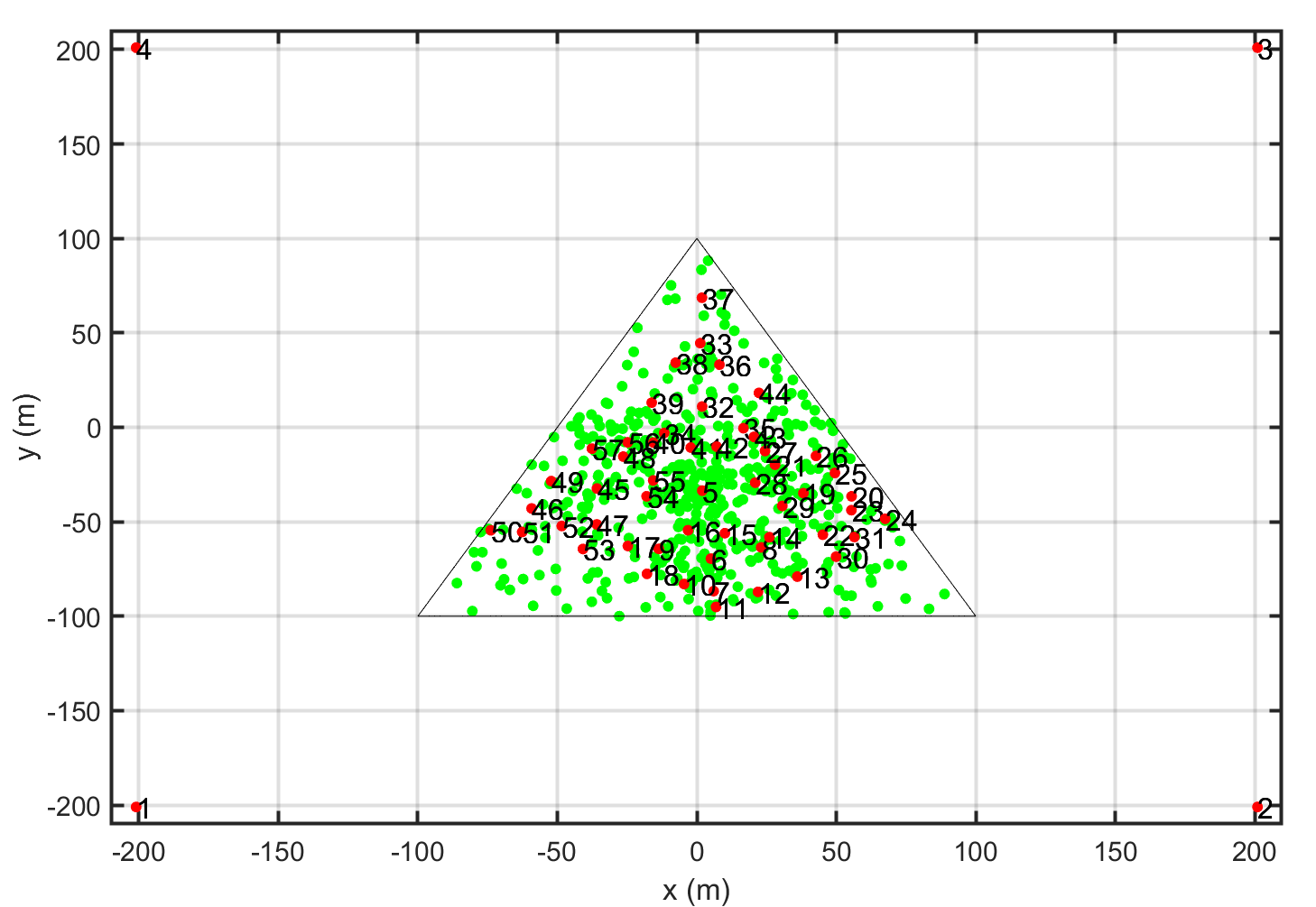}}
 \vspace{-.5cm}
\caption{Desired (a) elliptic, (b) multi-circular, and (c) triangular target distributions. The red spots are the abstract representation of the target sets by $52$ nodes.}
\label{coveragedata}
\end{figure}

\begin{figure}[h]
\centering
 \subfigure[]{\includegraphics[width=0.32\linewidth]{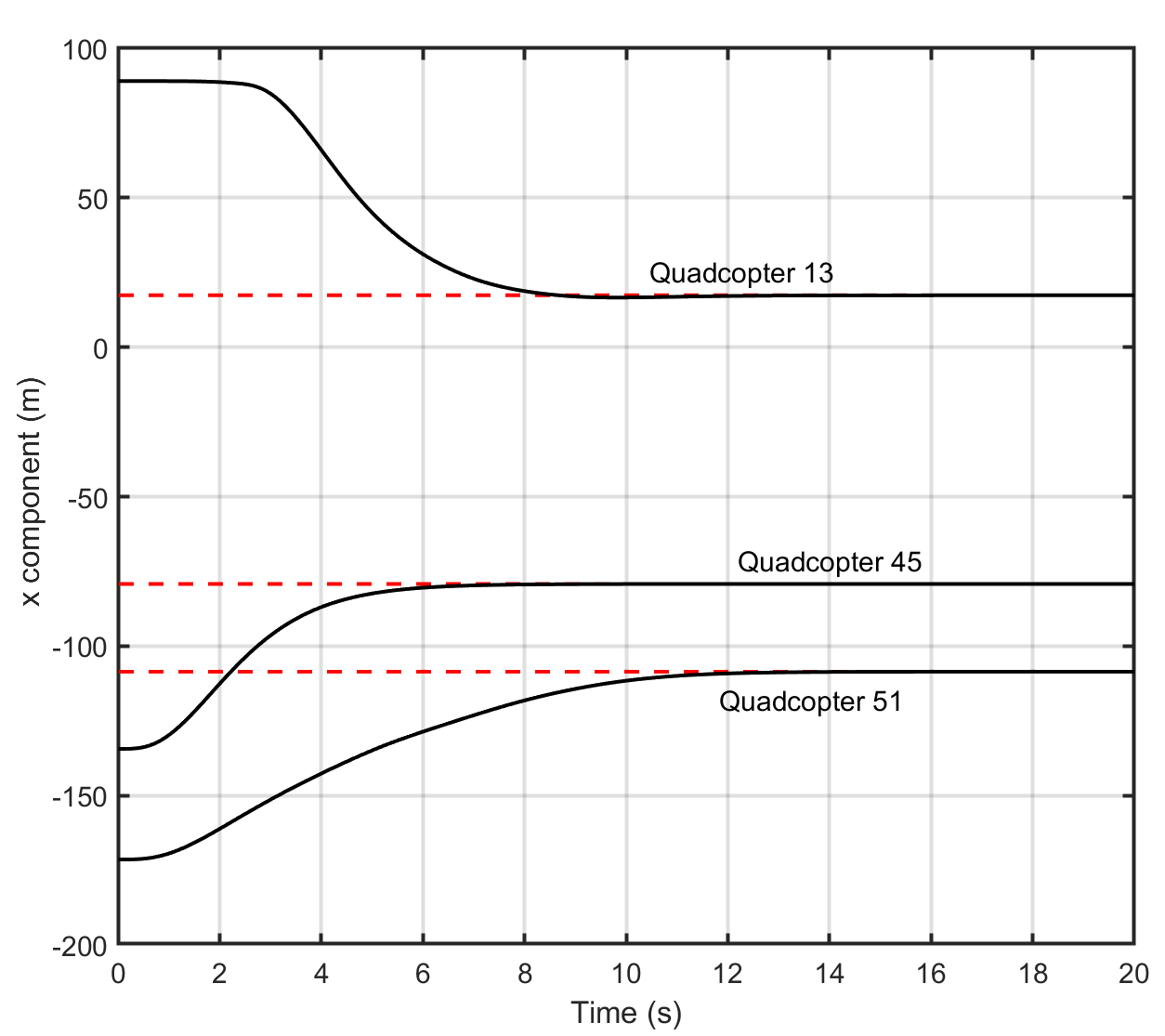}}
 \subfigure[]{\includegraphics[width=0.32\linewidth]{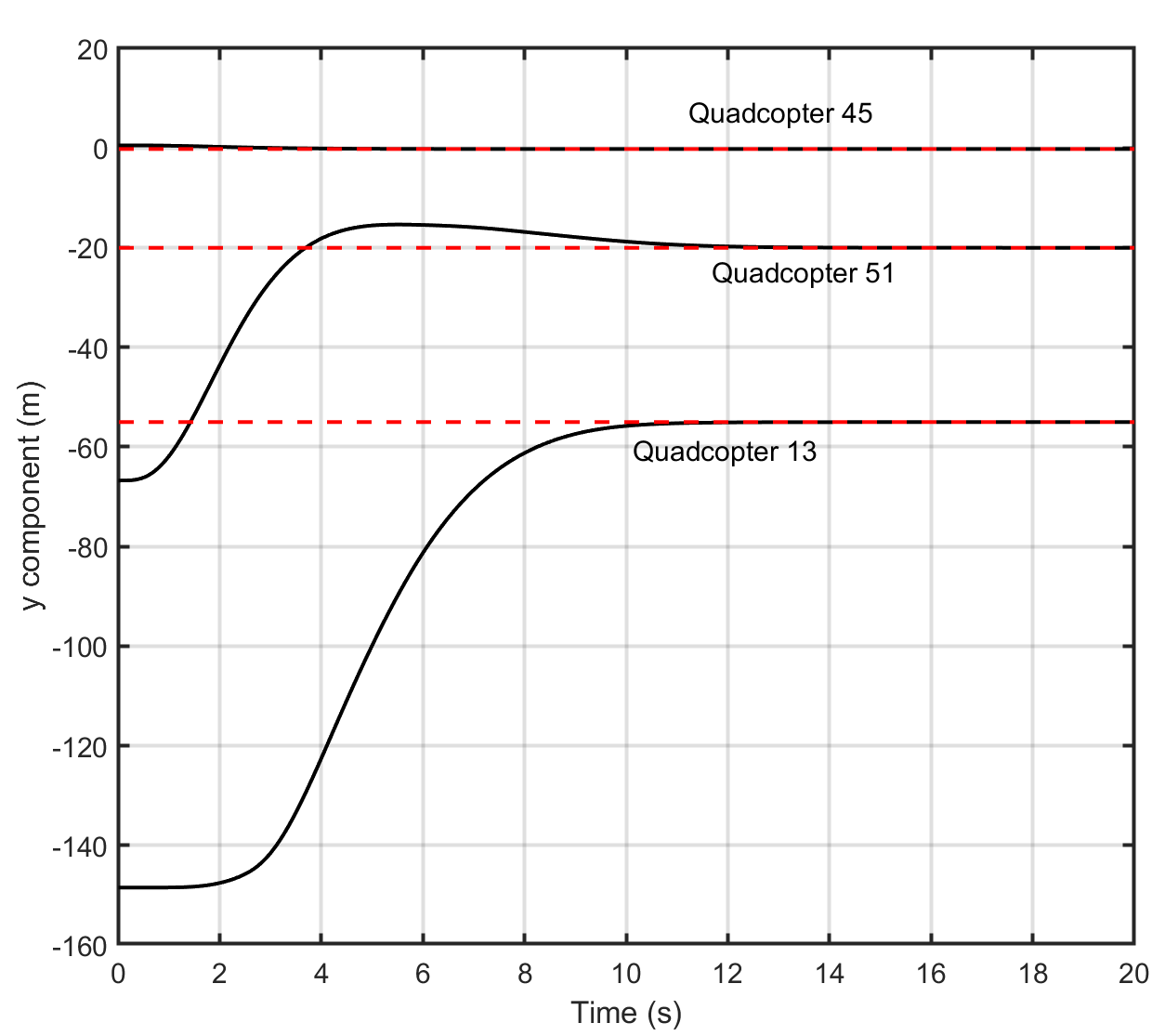}}
 \subfigure[]{\includegraphics[width=0.32\linewidth]{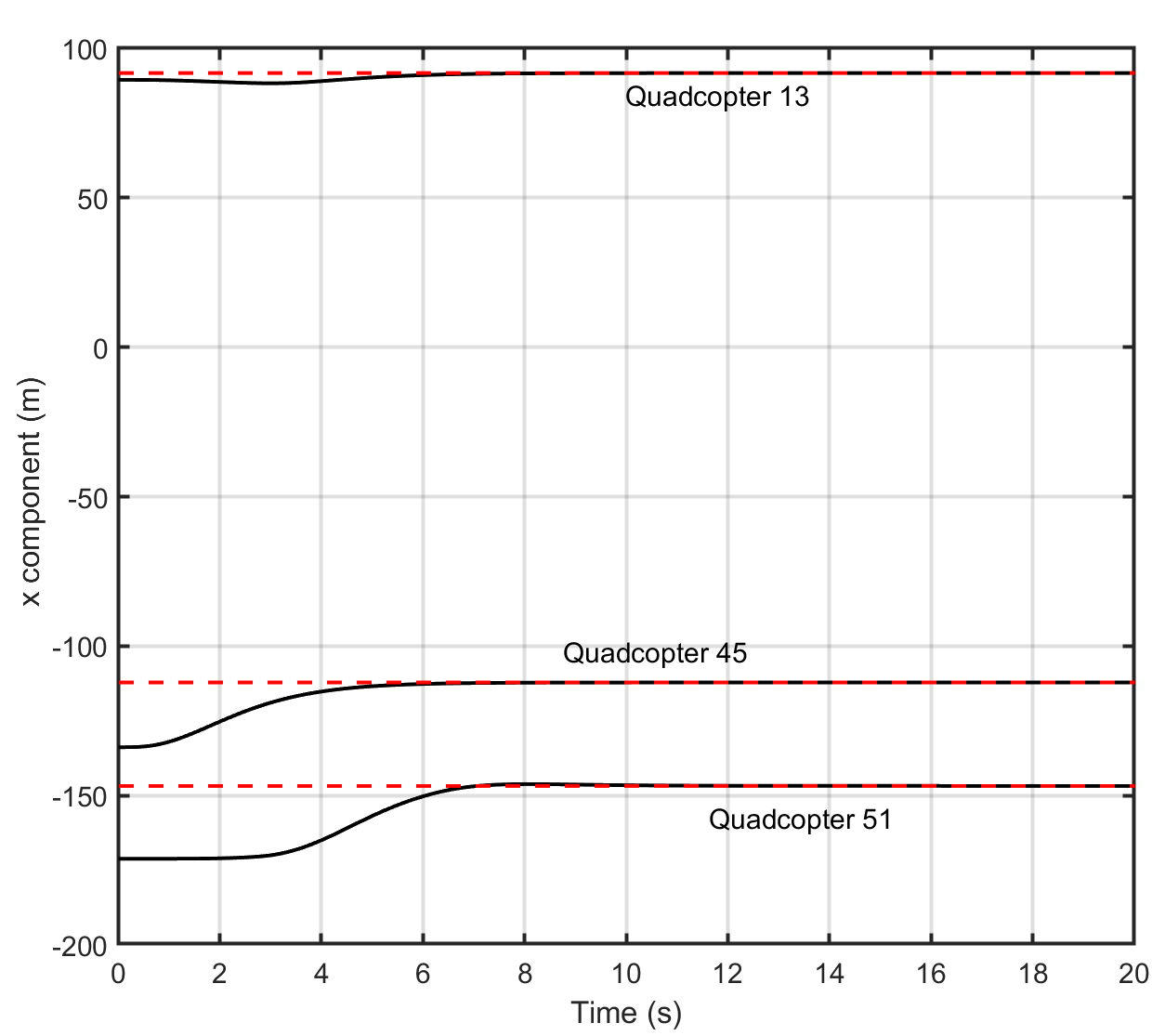}}
  \subfigure[]{\includegraphics[width=0.32\linewidth]{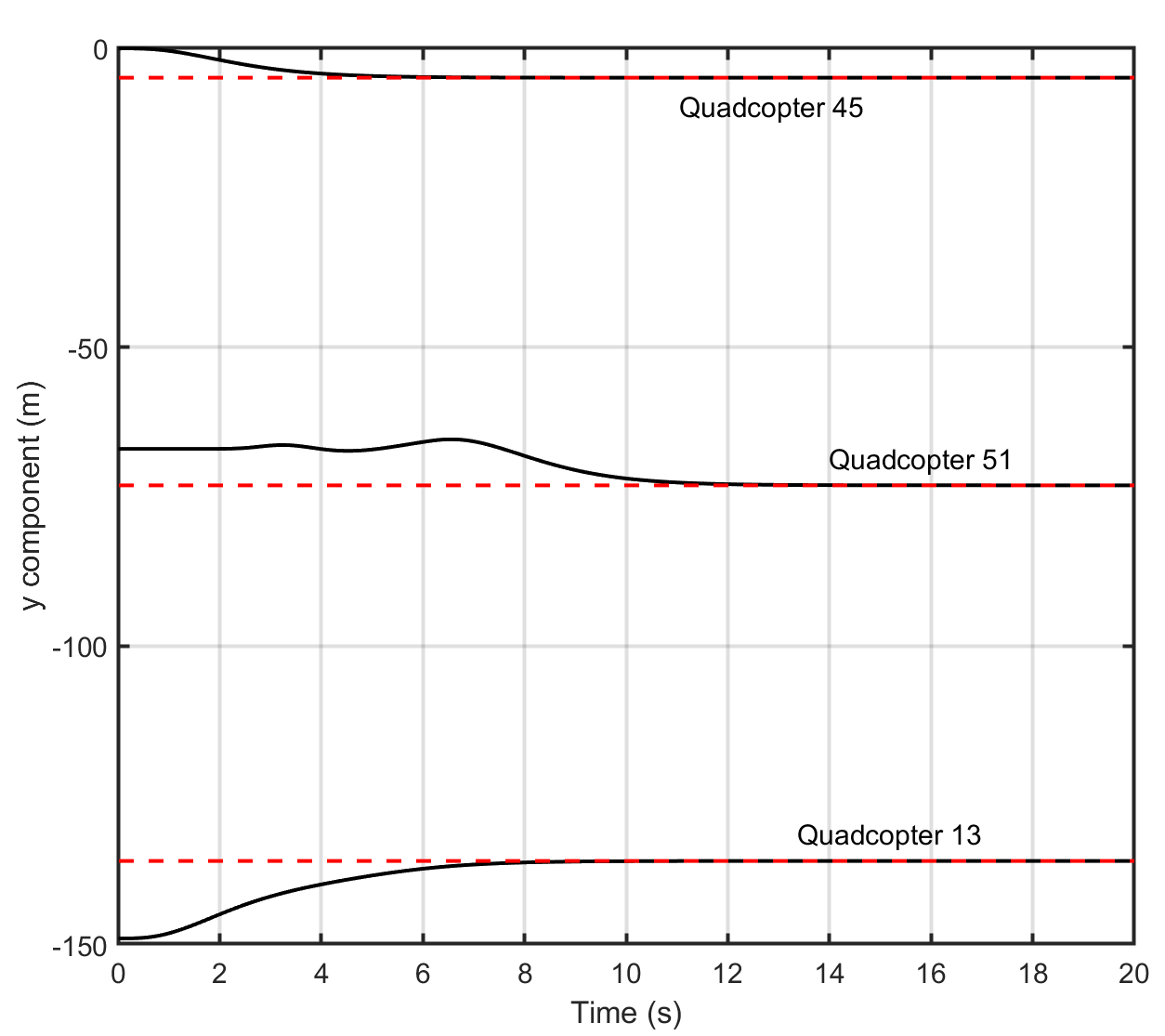}}
 \subfigure[]{\includegraphics[width=0.32\linewidth]{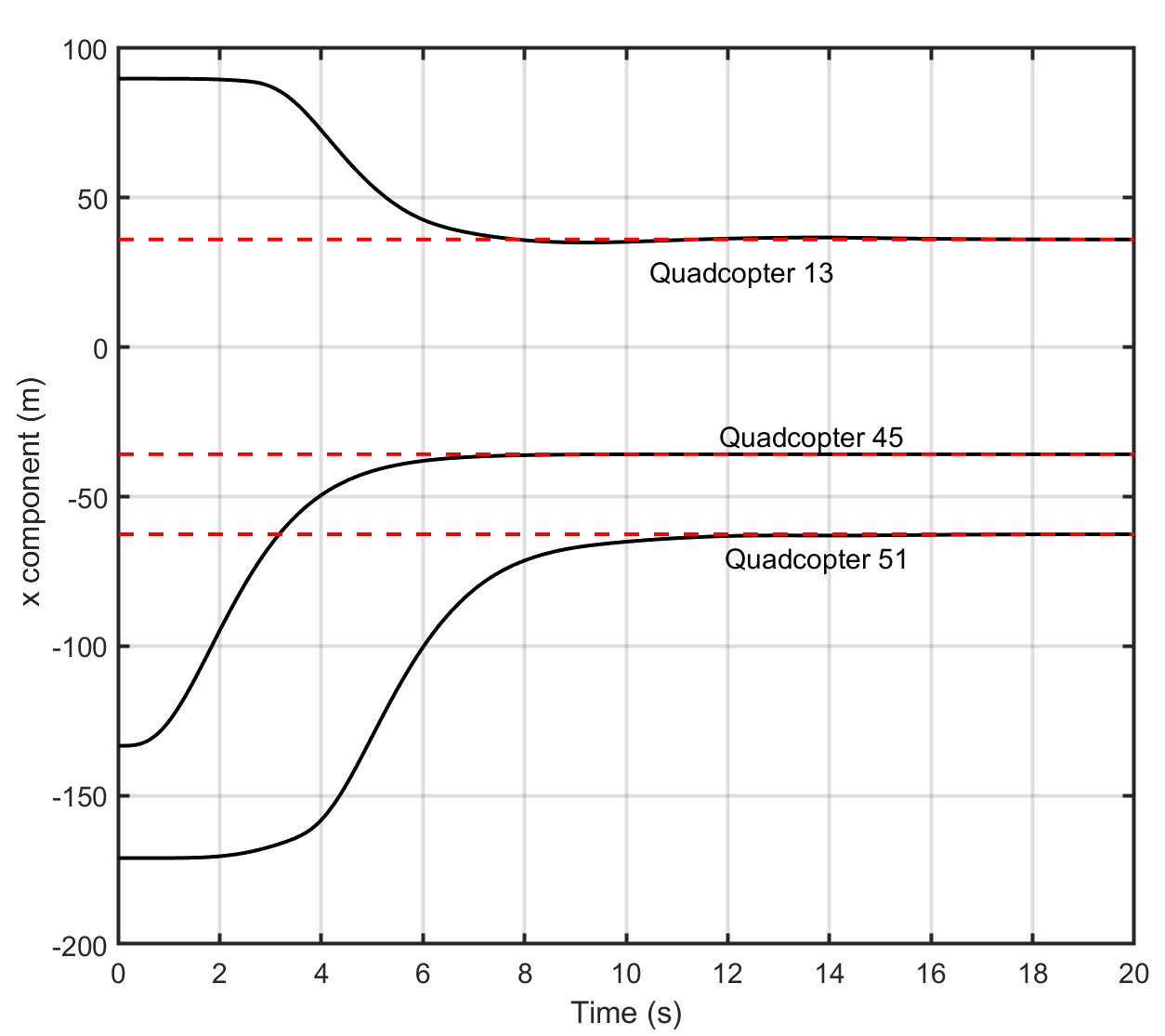}}
 \subfigure[]{\includegraphics[width=0.32\linewidth]{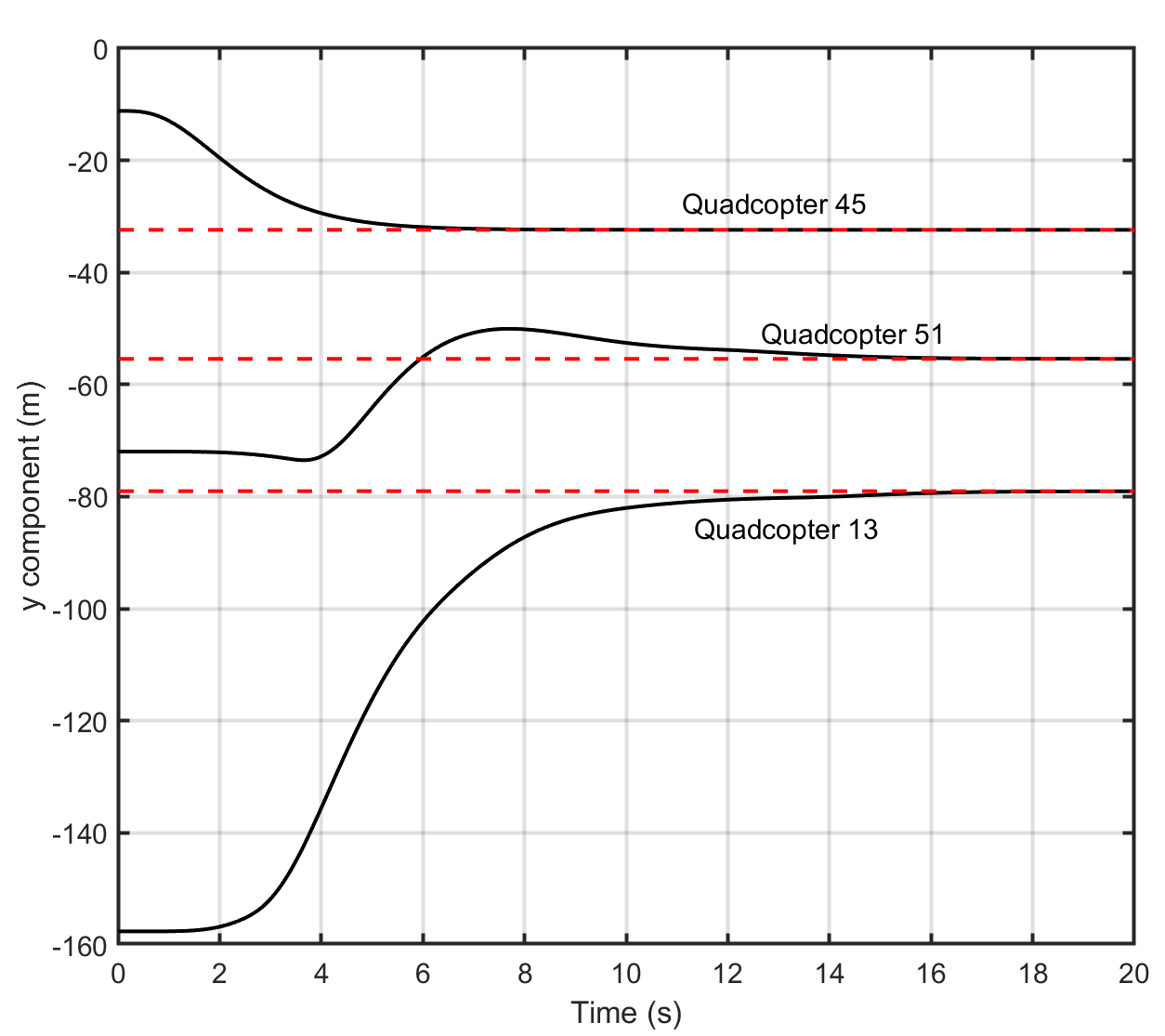}}
\caption{(a,c,e) $x$ components  of actual position of agents $13$, $45$, and $51$ over time interval $[0,20]s$ shown by solid black curves. $x$ components  of desired position of agents $13$, $45$, and $51$ over time interval $[0,20]s$ shown by dashed red curves. (b,d,f) $y$ components  of actual position of agents $13$, $45$, and $51$ over time interval $[0,20]s$ shown by solid black curves. $y$ components  of desired position of agents $13$, $45$, and $51$ over time interval $[0,20]s$ shown by dashed red curves.}
\label{coordinate}
\end{figure}

We apply the proposed coverage algorithm to cover elliptic, multi-circle, and triangular zones, each specified by the corresponding data set $\mathcal{D}$, where $\mathcal{D}$ defines $500$ data points shown by green spots in Figs. \ref{coveragedata} (a,b,c). As shown,  each target set is represented by $52$ points  positioned at $\mathbf{p}_6$ through $\mathbf{p}_{57}$, where they are obtained by using the approach presented in Section \ref{Assignment of Leaders' Positions}. These points are shown by red in Figs. \ref{coveragedata} (a,b,c). 

Figures \ref{coordinate} shows the components of actual and desired positions of quadcopters $13$, $45$, and $51$ are plotted versus time overt time interval $[0,20]s$, by solid black and dashed red, respectively. As seen the actual position of these three agents almost reach the designated desired positions at time $t=12s$. Figure \ref{comweghtconvergence} shows the time-varying communication weights of agent $41$ with its in-neighbors defined by $\mathcal{N}_{41}=\left\{34,5,32\right\}$. As shown, $w_{41,j}(t)$ converges to its desired value of $\varpi_{41,j}$ in about $12$ seconds for every $j\in \mathcal{N}_{41}$.

\begin{figure}[ht]
    \includegraphics[width=\linewidth]{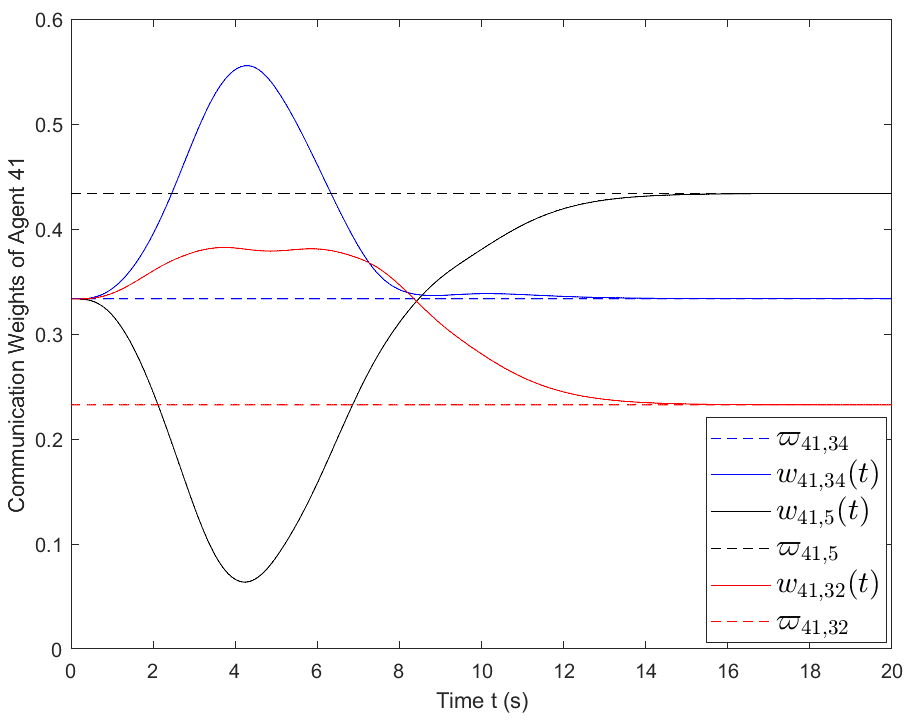}
    \centering
    \caption{Communication weights of agent $41$ with in-neighbor agents $34$, $5$, and $32$. The time varying communication weights $w_{41,34}(t)$, $w_{41,5}(t)$, and $w_{41,32}(t)$ converge to the desired values $\varpi_{41,34}$, $\varpi_{41,5}$, and $\varpi_{41,32}$ in about $12$ seconds.}
    \label{comweghtconvergence}
\end{figure}

\section{Conclusion}\label{Conclusion}
We proposed a novel neural-network-based approach for multi-agent coverage of a target with unknown distribution. We developed a forward approach to train the weights of the coverage neural network such that: (i) the target is represented by a finite number of points, (ii) the multi-agent system quickly and decentralizedly converge to the designated points representing the target distribution. For validation,  we performed a simulation of multi-agent coverage using a team of $57$ quadcopters, each of which is represented by at least one neuron of a the coverage neural network. The simulation results verified  fast and decentralized convergence of the proposed multi-agent coverage where each quadcopter reached its designated desired position in about $12$ seconds.

\bibliographystyle{IEEEtran}
\bibliography{reference}
\appendix
Let $x_i$, $y_i$, and $z_i$ denote position components of quadcopter $i\in 
\mathcal{V}$, and $p_i$, $m_i$ $\psi_i$, $\theta_i$, and $\psi_i$ denote the thrust force magnitude, mass, roll, pitch,  yaw angles of quadcopter $i\in \mathcal{V}$, and $g=9.81m/s^2$ be the gravity acceleration.   Then, we can use the model developed in \cite{rastgoftar2021safe, asslouj2022quadcopter}  and present the quadcopter dynamics by 
\begin{equation}
\label{generalnonlineardynamics}
\dot{\mathbf{x}}_i=\mathbf{f}\left(\mathbf{x}_i,\mathbf{u}_i\right)
    ,
\end{equation}
where $\mathbf{f}\left(\mathbf{x}_i,\mathbf{u}_i\right)=\mathbf{F}\left(\mathbf{x}_i\right)+\mathbf{G}\left(\mathbf{x}_i\right)\mathbf{u}_i$
\begin{equation}
\resizebox{0.99\hsize}{!}{%
$
    \mathbf{x}_i=\begin{bmatrix}
x_i&y_i&z_i&\dot{x}_i&\dot{y}_i&\dot{z}_i&\phi_i&\theta_i&\psi_i&\dot{\phi}_i&\dot{\theta}_i&\dot{\psi}_i&p_i&\dot{p}_i
\end{bmatrix}
^T,
$
}
\end{equation}
\begin{equation}
    \mathbf{u}_i=\begin{bmatrix}
u_{1,i}&u_{2,i}&u_{3,i}&u_{4,i}
\end{bmatrix}
^T,
\end{equation}
\begin{equation}
{\color{black}\mathbf{F}\left(\mathbf{x}_i\right)=\begin{bmatrix}\dot{x}_i\\
    \dot{y}_i\\
    \dot{z}_i\\
    {p_i\over m}\left(\sin{\phi_i}\sin{\psi_i} + \cos{\phi_i}\cos{\psi_i}\sin{\theta_i}\right)\\
    {p_i\over m}\left(\cos{\phi_i}\sin{\psi_i}\sin{\theta_i}- \sin{\phi_i}\cos{\psi_i}\right)\\
    {p_i\over m}\cos{\phi_i}\cos{\theta_i}-9.81\\
    \dot{\phi}_i\\
    \dot{\theta}_i\\
    \dot{\psi}_i\\
    0\\
    0\\
    0\\
    \dot{p}_i\\
    0\\
    \end{bmatrix}
    }
,
\end{equation}
\begin{equation}
    {\mathbf{G}}\left({\mathbf{x}}_i\right)=\begin{bmatrix}
    {\mathbf{g}}_1&{\mathbf{g}}_2&{\mathbf{g}}_3&{\mathbf{g}}_4
    \end{bmatrix}
    =
\begin{bmatrix}
\mathbf{0}_{9\times 1}&\mathbf{0}_{9\times 3}\\
\mathbf{0}_{3\times 1}&\mathbf{I}_3\\
0&\mathbf{0}_{1\times 3}\\
1&\mathbf{0}_{1\times 3}\\
\end{bmatrix}
,
\end{equation}
By defining transformation $\mathbf{x}_i\rightarrow\left(\mathbf{r}_i,\dot{\mathbf{r}}_i,\ddot{\mathbf{r}}_i,\dddot{\mathbf{r}}_i,\psi_i,\dot{\psi}_i\right)$, we can use the input-state feedback linearization approach presented in \cite{rastgoftar2021safe} and convert the the quadcopter dynamics to the following external dynamics:
\begin{subequations}
    \begin{equation}\label{quaddynamicsext}
        \ddddot{\mathbf{r}}_i=\mathbf{v}_i,
    \end{equation}
    \begin{equation}
        \ddot{\psi}_i=u_{\psi,i},
    \end{equation}
\end{subequations}
where $\mathbf{v}_i$ is related to the control input of quadcopter $i\in \mathcal{V}$, denoted by $\mathbf{u}_i$, by \cite{rastgoftar2022integration}
\begin{equation}
    \mathbf{v}_i=\mathbf{M}_{1,i}\mathbf{u}_i+\mathbf{M}_{2,i},
\end{equation}
with
\begin{subequations}
\begin{equation}
  \mathbf{M}_{1,i}=  \begin{bmatrix}
        L_{{\mathbf{g}}_{_{1}}}L_{{\mathbf{f}}}^{3}x_i& L_{{\mathbf{g}}_{_{2}}}L_{{\mathbf{f}}}^{3}x_i& L_{{\mathbf{g}}_{_{3}}}L_{{\mathbf{f}}}^{3}x_i& L_{{\mathbf{g}}_{_{4}}}L_{{\mathbf{f}}}^{3}x_i\\
        L_{{\mathbf{g}}_{_{1}}}L_{{\mathbf{f}}}^{3}y_i& L_{{\mathbf{g}}_{_{2}}}L_{{\mathbf{f}}}^{3}y_i& L_{{\mathbf{g}}_{_{3}}}L_{{\mathbf{f}}}^{3}y_i& L_{{\mathbf{g}}_{_{4}}}L_{{\mathbf{f}}}^{3}y_i\\
        L_{{\mathbf{g}}_{_{1}}}L_{{\mathbf{f}}}^{3}z_i& L_{{\mathbf{g}}_{_{2}}}L_{{\mathbf{f}}}^{3}z_i& L_{{\mathbf{g}}_{_{3}}}L_{{\mathbf{f}}}^{3}z_i& L_{{\mathbf{g}}_{_{4}}}L_{{\mathbf{f}}}^{3}z_i\\
         L_{{\mathbf{g}}_{_{1}}}L_{{\mathbf{f}}}\psi_i& L_{{\mathbf{g}}_{_{2}}}L_{{\mathbf{f}}}\psi_i& L_{{\mathbf{g}}_{_{3}}}L_{{\mathbf{f}}}\psi_i& L_{{\mathbf{g}}_{_{4}}}L_{{\mathbf{f}}}\psi_i\\
        \end{bmatrix}
        \in \mathbb{R}^{14\times 14}
        ,
\end{equation}
\begin{equation}
  \mathbf{M}_{2,i}=  \begin{bmatrix}
        L_{{\mathbf{f}}}^{4}x_i&
       L_{{\mathbf{f}}}^{4}y_i&
        L_{{\mathbf{f}}}^{4}z_i&
         L_{{\mathbf{f}}}^2\psi_i
        \end{bmatrix}
        ^T\in \mathbb{R}^{14\times 1}
        .
\end{equation}
\end{subequations}
In this paper, we assume that the desired yaw angle and its time derivative are both zero at any time $t$, and choose
\begin{equation}
    u_{\psi,i}=-k_5\dot{\psi}_i-k_6{\psi}_i
\end{equation}
Therefore, we can assume that $\psi_i(t)=0$ at any time $t$, as a result, the quadcopter $i\in \mathcal{V}$ can be modeled by Eq. \eqref{quaddynamicsext}.

\begin{IEEEbiography}[{\includegraphics[width=1in,height=1.25in,clip,keepaspectratio]{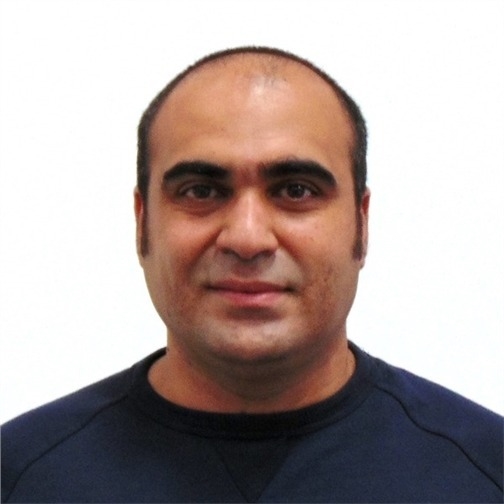}}]
{\textbf{Hossein Rastgoftar}} an Assistant Professor at the University of Arizona. Prior to this, he was an adjunct Assistant Professor at the University of Michigan from 2020 to 2021. He was also an Assistant Research Scientist (2017 to 2020) and a Postdoctoral Researcher (2015 to 2017) in the Aerospace Engineering Department at the University of Michigan Ann Arbor. He received the B.Sc. degree in mechanical engineering-thermo-fluids from Shiraz University, Shiraz, Iran, the M.S. degrees in mechanical systems and solid mechanics from Shiraz University and the University of Central Florida, Orlando, FL, USA, and the Ph.D. degree in mechanical engineering from Drexel University, Philadelphia, in 2015. His current research interests include dynamics and control, multiagent systems, cyber-physical systems, and optimization and Markov decision processes.
\end{IEEEbiography}
\end{document}